\newcommand{\D}{\mathcal{D}}
\newcommand{\revision}{\textcolor{black}}
\newtheorem{theorem}{Theorem}
\newtheorem{lemma}{Lemma}
\newtheorem{definition}{Definition}
\newtheorem{assumption}{Assumption}
\DeclareMathOperator*{\argmin}{argmin}
\DeclareMathOperator*{\argmax}{argmax}
\newcommand{\old}[1]{}
\crefname{equation}{}{}
\crefname{assumption}{Assumption}{Assumptions}
\Crefname{appendix}{Appendix}{Appendices}
\begin{document}

%
\runningtitle{Eliciting Truthful Feedback for Preference-Based Learning}

%
\runningauthor{Landolt, Maddux, Schlaginhaufen, Vaishampayan, Kamgarpour}

\twocolumn[
\aistatstitle{Eliciting Truthful Feedback for Preference-Based Learning\\ via the VCG Mechanism}
\aistatsauthor{Leo Landolt\textsuperscript{1,2}, Anna Maddux*\textsuperscript{,1}, Andreas Schlaginhaufen*\textsuperscript{,1},}
\aistatsauthor{Saurabh Vaishampayan\textsuperscript{1}, Maryam Kamgarpour\textsuperscript{1}}\vspace{0.2cm}
\aistatsaddress{\textsuperscript{1}EPFL, Switzerland\quad\textsuperscript{2}ETH Zürich, Switzerland}]
\begin{abstract}

We study resource allocation problems in which a central planner allocates resources among strategic agents with private cost functions in order to minimize a social cost, defined as an aggregate of the agents’ costs. This setting poses two main challenges: (i) the agents’ cost functions may be unknown to them or difficult to specify explicitly, and (ii) agents may misreport their costs strategically. To address these challenges, we propose an algorithm that combines preference-based learning with Vickrey–Clarke–Groves (VCG) payments to incentivize truthful reporting. Our algorithm selects informative preference queries via D-optimal design, estimates cost parameters through maximum likelihood, and computes VCG allocations and payments based on these estimates. In a one-shot setting, we prove that the mechanism is approximately truthful, individually rational, and efficient up to an error of $\tilde{\mathcal O}(K^{-1/2})$ for $K$ preference queries per agent. In an online setting, these guarantees hold asymptotically with sublinear regret at a rate of $\tilde{\mathcal O}(T^{2/3})$ after $T$ rounds. Finally, we validate our approach through a numerical case study on demand response in local electricity markets.
\end{abstract}

\section{INTRODUCTION}
Allocating resources among self-interested agents is a fundamental problem in many real-world systems, including electricity markets, communication networks, and transportation systems \citep{chremos2024mechanism}. In these problems, a central planner aims to determine a resource allocation that minimizes the social cost, defined as an aggregate function of individual agents' costs. Agents' cost functions are typically private, thus, the central planner relies on the agents to communicate their cost functions. However, this is often problematic, either because agents cannot explicitly specify their cost functions or because doing so is difficult.

A prominent example arises in local electricity markets, where a grid operator procures energy flexibility from domestic consumers through deferrable appliance usage, such as shifting or adjusting heating or laundry schedules \citep{li2020transactive,tsaousoglou2022market}. Reporting the exact economic costs for these adjustments can be difficult and impractical for consumers. Instead, they can typically more easily express their preferences over different flexibility options.

This motivates us to consider a setting where agents provide preference feedback to the central planner rather than reporting their cost functions directly. The central planner can then leverage this feedback to infer agents’ underlying cost structures and compute an allocation that minimizes the resulting social cost.

A central challenge, however, is that agents may behave strategically, misreporting their preferences to manipulate the resource allocation outcome for their benefit. This can lead to socially suboptimal outcomes. In electricity markets, for instance, consumers may misreport their preferences to secure more favorable energy tariffs, thereby undermining efficient allocation \citep{yazdani2017strategic}.

In this paper, we investigate how the central planner can elicit truthful preference feedback from agents to solve the resource allocation problem, minimizing the social cost. 

\subsection{Related work}
Learning from preference feedback was first studied in the bandit literature, where the usual numerical rewards were replaced by pairwise preferences \citep{yue2009interactively, ailon2014reducing}. More recently, preference-based learning has received considerable attention across a variety of applications, most prominently in fine-tuning large language models \citep{ziegler2019fine, rafailov2023direct}, reinforcement learning \citep{christiano2017deep}, robotics and human-robot interaction, and personalized healthcare decision support. Preference feedback is especially valuable in settings involving human interaction, as human users are often better at expressing relative judgment in the form of preferences between outcomes than providing numerical evaluations quantifying the value of an outcome \citep{pereira2019online,lee2023aligning}.

A common modeling assumption in this literature is that agents' preferences can be represented by cardinal models, such as the Bradley–Terry model \citep{bradley1952rank, luce1959individual}. Prior works have largely focused on the case where agents truthfully report preferences according to such models \citep{azar2024general,chowdhury2024differentially}. An important open challenge is to understand how preference learning methods perform in settings where agents may strategically misreport their preferences in pursuit of personal gain.

Strategic behavior in multi-agent resource allocation problems has been extensively studied in mechanism design. In settings with transferable utility -- where agents' utilities can be expressed in monetary terms -- mechanisms commonly incorporate payments to incentivize socially desirable and efficient outcomes \citep{fallah2024fair,gois2025performative}. In such cases, an agent’s utility is defined as the payment received minus the incurred costs. A classical result establishes that under the Vickrey--Clarke--Groves (VCG) mechanism, agents are incentivized to truthfully report their cost functions \citep{vickrey1961counterspeculation,clarke1971multipart,groves1973incentives}.

However, this approach requires that agents disclose their entire cost functions explicitly. A more realistic assumption is that agents instead provide numerical feedback, i.e., reporting their realized cost for a given allocation \citep{babaioff2009characterizing,devanur2009price}. Closely related to our setting, \citet{kandasamy2023vcg} study numerical feedback over a finite allocation space, and show that the VCG mechanism incentivizes truthful reporting asymptotically. This is restrictive -- for example in electricity markets, energy flexibility is naturally continuous, and it is unrealistic to expect consumers to report precise economic cost values. \revision{Recent work has also explored auction-based mechanisms for eliciting preference feedback in the context of fine-tuning large language models, but focuses on second-price auctions rather than general resource allocation \citep{zhang2024vickreyfeedback}.} To address these limitations, we study resource allocation problems with a compact allocation space in which agents provide preference feedback rather than exact numerical reports.

\subsection{Contributions} To the best of our knowledge, this is the first work to study preference learning in resource allocation problems with strategic agents under transferable utilities, where agents’ preferences are modeled using the Bradley–Terry framework. Our contributions are threefold:
\begin{itemize}
    \item We propose a resource allocation algorithm that learns agents’ private costs, modeled as a linear function of features, from strategic pairwise preferences. The algorithm uses maximum likelihood estimation for parameter estimation, D-optimal design for query selection, and VCG payments to ensure truthful feedback.
    \item We establish guarantees analogous to the classical VCG mechanism for truthfulness, individual rationality, and efficiency. Our algorithm satisfies these properties approximately in the one-shot setting (\cref{th:approximate}) and asymptotically in the multi-round setting (\cref{th:asymptotic}).
    \item We demonstrate the applicability of our setup to local electricity markets and validate our guarantees through a numerical case study in this domain.
\end{itemize} 

\subsection{Notation} We write $\|{\cdot}\|$ for the Euclidean norm and $\langle {\cdot},{\cdot}\rangle$ for the standard inner product in $\mathbb R^d$. We use the standard notation $\mathcal O(T)$ for asymptotic upper bounds, as well as $\tilde{\mathcal O}(T) = O(T\, \text{polylog}(T))$ for suppressing polylogarithmic terms. Finally, we denote $[N]:=\{1,\dots,N\}$.

\section{BACKGROUND AND PROBLEM FORMULATION}
\label{sec:setup}
We consider a resource allocation problem with $N$ strategic agents, each with a compact allocation set $\mathcal A_i$. A central planner chooses an allocation $a=
(a_1,\dots,a_N)\in\mathcal A$, where $\mathcal A=\mathcal A_1 \times\hdots\times\mathcal A_N$. In our electricity market application, $N$ is the number of consumers and $a_i$ is the energy supplied (kWh).  The allocation results in a continuous cost $c_i:\mathcal A\to\mathbb R_+$ for each agent. The goal of the central planner is to find a feasible allocation that minimizes the social cost
\begin{align}\label{eq:social-cost}
    J(a):=\sum_{i=1}^N c_i(a),\quad \text{s.t. }a\in\mathcal F,
\end{align}
where $\mathcal F\subseteq \mathcal A$ is a compact feasible set. In the electricity market application, for example, $\mathcal F=\{a\in\mathcal A \mid \sum_{i=1}^N a_i=P\}$, where $P$ is the total energy flexibility required by the grid (kWh). This leads to a classical problem in mechanism design: a central planner must elicit private cost functions from strategic agents in order to find a socially efficient allocation.  

\subsection{Auctions}
A common approach to address the above problem are auctions, where each agent submits a bid function $b_i:\mathcal A\to\mathbb R_+$ intended to reflect their cost. A mechanism consists of an allocation rule $a^*:b\to\mathcal A$ and a payment rule $p:b\to\mathbb R^N$ based on the agents' bid functions $b=(b_1,\ldots,b_N)$. Then, the central planner finds an allocation $a^*(b)$ such that:
\[
    a^*(b)\in\arg\min_{a\in\mathcal F} \hat{J}(a;b),
\]
where $\hat{J}(a;b)=\sum_{i=1}^N b_i(a)$. The central planner assigns this allocation $a^*(b)=(a_1^*(b),\dots,a_N^*(b))$ to the agents who receive payments $p(a^*(b))=(p_1(a^*(b)),\dots,p_N(a^*(b)))$. As a result, agent $i$ experiences utility:

\begin{equation}
\label{eq:utility}
  u_i(a^*(b))=p_i(a^*(b))- c_i(a^*(b)).  
\end{equation}

An agent is said to bid truthfully if $\bar b_i(a)=c_i(a)$ for all $a\in\mathcal A$. 
However, agents may misreport their costs and submit an arbitrary bid $b_i$ to increase their utility. For example, in a pay-as-bid mechanism, also known as first-price auction, since payments to agents are equal to their bids, a rational agent would overbid to ensure profit \citep[Chapter 14.2]{karlin2017game}. Consequently, $a^*(b)$ may not minimize the true social cost \cref{eq:social-cost}. Thus, to find a socially efficient allocation, the central planner must carefully design the payment rule in order to elicit truthful bids.

In particular, any desirable mechanism should satisfy the following three properties:
\begin{enumerate}[label={\arabic*)}]
    \item \textbf{Truthfulness:} Truthful bidding with $\bar b_i(a)=c_i(a)$ for all $a\in\mathcal A$, is a weakly dominant strategy Nash equilibrium, that is: $$u_i(a^*(\bar b_i, b_{-i}))\geq u_i(a^*(b_i,b_{-i})),$$ for all $b_i:\mathcal A\to\mathbb R_+$ and $i\in[N]$. We use $b_{-i}$ for the (potentially untruthful) bids of other agents.
    \item \textbf{Individual rationality:} A truthful agent has non-negative utility, that is:
    $$u_i(a^*(\bar b_i,b_{-i}))\geq 0,$$ for all bids $b_{-i}$ of other agents.
    \item \textbf{Efficiency:} When all agents bid truthfully, the resulting allocation minimizes the social cost, that is:\looseness-1 $$a^*(\bar b)\in\argmin_{a\in\mathcal F} J(a).$$
\end{enumerate}
A well-known mechanism that satisfies these properties is the VCG mechanism \citep{vickrey1961counterspeculation,clarke1971multipart,groves1973incentives}, which we discuss in \cref{sec:core-components}.

\subsection{Preference model}
    In our setting, agents cannot directly report their cost functions, but instead express their preferences between two allocations. We model these preferences using the stochastic Bradley–Terry model, which reflects bounded rationality in human decision-making \citep{bradley1952rank, luce1959individual}. 
	\begin{definition}[Bradley--Terry model]
	The probability that agent $i$ prefers allocation $a\in\mathcal A$ over allocation $a'\in\mathcal A$, denoted as $a\succ a'$, is:
    \begin{equation}
        \label{eq:bradley-terry}
        \Pr(a\succ a')=\sigma\left(u_i(a)-u_i(a')\right),     
     \end{equation}
	where $\sigma(x)=1/(1+e^{-x})$ is the sigmoid function.\footnote{The Bradley--Terry model can be extended with a rationality parameter $\beta_i > 0$. Then, the preference model for agent $i$ is $\Pr(a \succ a') = \sigma\left(\beta_i [u_i(a) - u_i(a')]\right)$. Large $\beta_i$ implies more deterministic decisions, whereas small $\beta_i$ introduces more randomness. We assume $\beta_i$ is known and, without loss of generality, set $\beta_i=1$ for all $i$. We discuss this further in \cref{app:rationality}.}
	\end{definition}
    
Applying the Bradley--Terry model to an agent's utility given in Equation \cref{eq:utility}, we obtain:
	\[\Pr(a\succ a')=\sigma\left(p_i(a)-c_i(a)-p_i(a')+c_i(a')\right).\]
    
	The payments $p_i(a)$ and $p_i(a')$ are known, therefore, they only induce an affine shift in the argument of the sigmoid function. We say that preference feedback is truthful if agent $i$ expresses their preference for $a$ over $a'$ according to a binary label $\bar y_i=\mathds{1}(a\succ a')$ where
\begin{align}
    \bar y_i\sim\text{Bernoulli}\bigl(\sigma(p_i(a)-c_i(a)-p_i(a')+c_i(a'))\bigr),
\end{align}
is a Bernoulli-distributed random variable sampled according to the Bradley--Terry model. While $\bar y_i$ represents the agent's truthful preference, they may act strategically to increase their utility. That is, for given allocations $a$ and $a'$, agent $i$ may report a preference $y_i\in\{0,1\}$ sampled from an arbitrary distribution. \revision{Note that we do not impose any structural assumptions on the agents' strategic feedback. In particular, agents can follow dynamic strategies when generating their untruthful preferences.}

\subsection{Problem formulation}
The central planner's goal is to find an allocation that minimizes the social cost \cref{eq:social-cost}. Unlike standard auctions, where agents submit bid functions, the planner actively queries each agent with a pair of allocations $a, a' \in \mathcal{A}$. Agents then report their preference in form of a binary label $y_i=\mathds{1}(a\succ a')$, and the planner learns their costs through repeated interaction. Importantly, to find a socially efficient allocation, the central planner must learn agents' true costs, which is possible only if agents report their preferences truthfully.
\vspace{0.2cm}
\begin{mdframed}
\textbf{Problem:} Can we design preference queries and payment rules to ensure \textit{truthfulness, individual rationality, and efficiency}?
\end{mdframed}
Motivated by the success of the VCG mechanism in auctions, we adopt VCG as our payment rule. We then study the above problem in two regimes. As a warm-up, we analyze a one-shot setting: the planner collects a fixed batch of pairwise preferences (e.g., via a questionnaire), estimates costs, and then computes allocations and payments from that estimate. In Section~\ref{sec:multi-round}, we turn to a multi-round setting in which the planner alternates between exploration and exploitation, improving allocations and payments over time as the cost estimates become more accurate. The one-shot game captures one-time allocations (e.g., crowdsourcing a single task), whereas the multi-round game captures repeated allocations, such as our electricity market application in Section~\ref{sec:experiments}.

\section{ONE-SHOT GAME: ALGORITHM AND ANALYSIS}
\label{sec:one-shot}
In the one-shot game, the planner first collects a set of $K$ pairwise preferences to learn the agents' costs, and then computes the allocation and VCG payments from this estimate. To efficiently explore the possibly infinite set of preference queries, we select queries using optimal design. We now detail each component.

\subsection{Algorithm components}
\label{sec:core-components}

\paragraph{Cost estimation.}
To learn the agents' cost functions, we make the following modeling assumption.

\begin{assumption} 
    Each agent $i$ has a linear cost 
    \begin{equation}
        c_i(a):=\langle\theta_i^*,\phi_i(a)\rangle
    \end{equation}
    with $\|\theta_i^*\|\le B$. The feature maps $\phi_i:\mathcal A\!\to\!\mathbb R^d$ are continuous, the differences $\{\phi_i(a)-\phi_i(a')\mid a,a'\in\mathcal A\}$ span $\mathbb R^d$, and $\max_{a\in\mathcal A}\|\phi_i(a)\|\le L$.
	\label{as:linear-cost}
\end{assumption}

After $K$ preference queries, the planner holds for each agent $i$ the dataset
$\mathcal D_i=\{(x_{i,k},y_{i,k})\}_{k=1}^K$, where
$x_{i,k}:=\phi_i(a_{i,k})-\phi_i(a'_{i,k})$ and
$y_{i,k}=\mathds{1}(a_{i,k}\succ a'_{i,k})$. Under the Bradley--Terry model, the planner can estimate the agents' costs by minimizing the negative log-likelihood
\begin{equation}
		\hat \theta_i=\argmin_{\|\theta_i\|\leq B}\mathcal L_{\mathcal D_i}(\theta_i),
		\label{eq:mle}
	\end{equation}
	where
	\[
	\begin{aligned}
	\mathcal L_{\mathcal D_i}(\theta_i):=-\sum_{(x_{i,k}, y_{i,k})\in\mathcal D_i} &\bigl[y_{ik}\log \sigma(\langle\theta_i,x_{i,k}\rangle)\\&\hspace{-0.3cm}+(1- y_{i,k})\log(\sigma(-\langle\theta_i,x_{i,k}\rangle))\bigr].	
	\end{aligned}
	\]
\revision{While the Bradley--Terry model depends on the agents' utilities, in our algorithm preference queries are collected during exploration rounds. In these rounds, payments are either zero (one-shot game) or constant across alternatives (multi-round game), so they do not affect the reported preferences.} The resulting estimated cost function of agent $i$ is then given by
\begin{equation}
    \label{eq:esti-cost}
    \hat c_i(a; \mathcal D_i) := \langle \hat \theta_i, \phi_i(a) \rangle.
\end{equation}
Importantly, each agent's cost function is learned separately and depends only on her own feedback.

\paragraph{Optimal design.}
The quality of the estimated cost functions $\hat{c}_i$ depends on how informative the individual preference queries $a_{i, k}$ and $a'_{i, k}$ are. Since the set of all queries $\mathcal{A}\times\mathcal{A}$ could be large or infinite -- we only assumed $\mathcal{A}_i$ to be compact -- exhaustive exploration (as by \citet{kandasamy2023vcg} for numerical feedback) is not possible. Instead, we leverage optimal experimental design \citep{pukelsheim2006optimal} to select a set of queries that optimally explore the allocation space. Specifically, choosing comparison pairs via a D-optimal design subroutine \citep{lattimore2020bandit}, denoted as $\textsc{D-Optimal-Design}(K)$ in Algorithm \ref{alg:oneshot}, yields the following confidence guarantee for the estimated costs.
\begin{lemma}\label{lem:epsilon}
    Under Assumption~\ref{as:linear-cost}, supppose for each agent $i$, the planner selects $K>d(d+1)/2$ queries by $\textsc{D-Optimal-Design}(K)$. If the agents provide truthful preference feedback $\bar{\mathcal D}_i = \{(x_{i,k}, \bar y_{i, k})\}_{k=1}^K$, then with probability at least $1-\delta$,
    \begin{equation}
        \left|(c_i(a) - c_i(a')) - \left(\hat{c}_i(a; \bar{\mathcal D}_i)) -  \hat{c}_i(a'; \bar{\mathcal D}_i)\right)  \right| \leq \epsilon_K(\delta),
    \end{equation}
    with $\epsilon_K(\delta) \in \tilde{\mathcal{O}}\left( d \sqrt{\log(1/\delta) / K}\right)$.
\end{lemma}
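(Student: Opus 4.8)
The plan is to reduce the statement to a bound on the prediction error of the MLE along the direction $z := \phi_i(a) - \phi_i(a')$, and then control this error using the two ingredients that define the algorithm: the D-optimal design and the strong convexity of the logistic loss. Since both $c_i$ and $\hat c_i$ are linear in the feature map under Assumption~\ref{as:linear-cost}, the quantity to be bounded is exactly $|\langle \hat\theta_i - \theta_i^*,\, z\rangle|$. I would introduce the (unregularized) design matrix $V_K := \sum_{k=1}^K x_{i,k}x_{i,k}^\top$ built from the query differences and apply the weighted Cauchy--Schwarz inequality
\[
  |\langle \hat\theta_i - \theta_i^*,\, z\rangle| \le \|\hat\theta_i - \theta_i^*\|_{V_K}\,\|z\|_{V_K^{-1}},
\]
so that it remains to bound the two factors separately.

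For the second factor, I would invoke the Kiefer--Wolfowitz equivalence theorem (as in \citet{lattimore2020bandit}): the D-optimal design $\pi^*$ over the difference set $\{\phi_i(a)-\phi_i(a') : a,a'\in\mathcal A\}$ (which spans $\mathbb R^d$ by Assumption~\ref{as:linear-cost}) satisfies $\|x\|^2_{V(\pi^*)^{-1}} \le d$ for every such difference $x$, in particular for $z$. The D-optimal design is supported on at most $d(d+1)/2$ points, which is precisely why the hypothesis $K > d(d+1)/2$ is imposed: it guarantees that an integer allocation of the $K$ queries can realize a matrix $V_K$ that is invertible and, after rounding, satisfies $V_K \succeq \tfrac{K}{1+o(1)}V(\pi^*)$. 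This yields $\|z\|^2_{V_K^{-1}} \lesssim d/K$, contributing a factor $\sqrt{d/K}$.

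For the first factor, I would use the standard generalized-linear-model / logistic MLE analysis. Writing the score at the truth as $\nabla\mathcal L_{\bar{\mathcal D}_i}(\theta_i^*) = \sum_k \bigl(\sigma(\langle\theta_i^*,x_{i,k}\rangle) - \bar y_{i,k}\bigr) x_{i,k}$, the noise terms are independent, mean zero (by truthfulness of $\bar y_{i,k}$ under Bradley--Terry), and bounded in $[-1,1]$; since the design is fixed before feedback, a Bernstein/self-normalized bound on this quadratic form gives $\|\nabla\mathcal L_{\bar{\mathcal D}_i}(\theta_i^*)\|^2_{V_K^{-1}} \lesssim d\log(1/\delta)$ with probability $1-\delta$ (the expectation is $\operatorname{tr}(V_K^{-1}V_K)=d$). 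Because $\|x_{i,k}\| \le 2L$ and $\|\theta_i\| \le B$, the sigmoid derivative is bounded below by $\kappa := \sigma'(2BL) > 0$ on the feasible region, so the Hessian obeys $\nabla^2\mathcal L_{\bar{\mathcal D}_i} \succeq \kappa V_K$; combining $\kappa$-strong convexity with the first-order optimality of the constrained MLE (with $\theta_i^*$ feasible since $\|\theta_i^*\|\le B$) yields $\|\hat\theta_i - \theta_i^*\|_{V_K} \le \tfrac{2}{\kappa}\|\nabla\mathcal L_{\bar{\mathcal D}_i}(\theta_i^*)\|_{V_K^{-1}} \lesssim \tfrac{1}{\kappa}\sqrt{d\log(1/\delta)}$. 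Multiplying the two factors gives $|\langle\hat\theta_i-\theta_i^*,z\rangle| \lesssim \tfrac{1}{\kappa}\, d\sqrt{\log(1/\delta)/K}$, which is the claimed $\epsilon_K(\delta) \in \tilde{\mathcal O}\bigl(d\sqrt{\log(1/\delta)/K}\bigr)$ since $\kappa$ depends only on $B$ and $L$.

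I expect the main obstacle to lie in making the two heuristic steps rigorous and mutually compatible: (i) the rounding of the continuous D-optimal design to an integer query budget, where the hypothesis $K>d(d+1)/2$ and a rounding guarantee must be used to ensure $V_K$ is invertible and loses only a constant factor relative to $K\,V(\pi^*)$; and (ii) the control of the logistic MLE, where the curvature constant $\kappa=\sigma'(2BL)$ degrades exponentially in $BL$, and where care is needed to route the estimation-error bound through $V_K^{-1}$ (rather than a regularized matrix) so that the $\sqrt{d/K}$ design guarantee applies cleanly. The score concentration and the strong-convexity-to-estimation-error step are otherwise routine.
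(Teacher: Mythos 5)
Your proposal is correct and follows the same skeleton as the paper's proof: reduce the claim to $|\langle\hat\theta_i-\theta_i^*,\,\phi_i(a')-\phi_i(a)\rangle|$ via linearity, split by weighted Cauchy--Schwarz into $\|\hat\theta_i-\theta_i^*\|_{V_{i,K}}\cdot\|\phi_i(a')-\phi_i(a)\|_{V_{i,K}^{-1}}$, bound the second factor by the Kiefer--Wolfowitz theorem together with the rounding argument that turns the support-size bound $d(d+1)/2$ into $V_{i,K}\succeq n\,V_i(\pi_i^*)$ with $n\ge K-d(d+1)/2$ (this is exactly the paper's \cref{lem:design}), and bound the first factor by an MLE confidence bound. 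The only place you diverge is that first factor: the paper imports it as a black box (\cref{lem:schlaginhaufen}, from prior work, with $\gamma_{i,K}^2(\delta)=\kappa_i[\log(1/\delta)+d\log(\max\{e,4eBL(K-1)/d\})]$), whereas you re-derive it from scratch via score concentration plus $\kappa$-strong convexity of the logistic loss on the ball $\|\theta_i\|\le B$. Your derivation is sound: the queries are chosen by the D-optimal design before any feedback is observed, so the design is fixed, the score $\sum_k(\sigma(\langle\theta_i^*,x_{i,k}\rangle)-\bar y_{i,k})x_{i,k}$ is a sum of independent bounded mean-zero vectors, and the quadratic-form bound $\lesssim d+\log(1/\delta)$ follows; the strong-convexity step $\|\hat\theta_i-\theta_i^*\|_{V_{i,K}}\le\frac{2}{\kappa}\|\nabla\mathcal L(\theta_i^*)\|_{V_{i,K}^{-1}}$ is the standard constrained-minimizer argument and your curvature constant $\sigma'(2BL)$ coincides with the paper's $1/\kappa_i$. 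What each approach buys: your fixed-design argument is self-contained and actually avoids the $d\log K$ term that the cited self-normalized bound carries (so your $\epsilon_K$ is marginally cleaner), while the paper's cited lemma is stronger in that it also holds for adaptively chosen queries, a generality not needed for this lemma but convenient as a reusable tool.
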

The above result follows by combining an MLE confidence guarantee \revision{\citep{schlaginhaufen2025efficient}} with the Kiefer--Wolfowitz theorem for optimal design \citep{kiefer1960equivalence}. Remarkably, the sample complexity for estimating $c_i$ up to a given error $\epsilon$ is independent of the size of the allocation space $\mathcal{A}$ and depends only on the dimension $d$ and $\epsilon$. As shown in Appendix~\ref{app:design}, a support of $d(d+1)/2$ distinct comparisons pairs is enough, which can then be queried repeatedly. For the precise definition of the subroutine $\textsc{D-Optimal-Design}(K)$ and implementation details, see Appendix~\ref{app:design}. The full proof of Lemma~\ref{lem:epsilon} is provided in \ref{app:epsilon}.

\paragraph{VCG mechanism.}
Based on each agent's learned cost \cref{eq:esti-cost}, we compute allocations and payments using the VCG mechanism \citep{vickrey1961counterspeculation,clarke1971multipart,groves1973incentives}.
\begin{definition}[VCG mechanism]
		The allocation and payment rules are defined as
		\begin{equation}
		\label{eq:vcg-learn}
		\begin{aligned}
		\hat{a}(\mathcal D) &\in \argmin_{a\in \mathcal F}\, \hat J(a;\mathcal D) , \\
		p_{i\text{}}(\hat{a}(\mathcal D)) &= \min_{a\in \mathcal F}\, \hat J_{-i}(a; \mathcal D_{-i}) - \hat J_{-i}(\hat{a}(\mathcal D); \mathcal D_{-i}),
		\end{aligned}
		\end{equation}
		where $\mathcal D=(\mathcal D_1,\ldots,\mathcal D_N)$, $\hat J(a;\mathcal D)=\sum_{i=1}^N \hat c_i(a;\mathcal D_i)$ and $\hat J_{-i}(a;\mathcal D_{-i})=\sum_{j\neq i} \hat c_j(a;\mathcal D_j)$.
		\end{definition}
While the VCG mechanism is well understood in the classical auction setting, where agents provide bid functions, our mechanism \cref{eq:vcg-learn} operates on cost functions estimated from preference feedback. This raises the question of how the desirable properties (i.e., truthfulness, individual rationality, and efficiency) are affected by the learning errors, which we analyze next.

\subsection{Algorithm}
We now present the one-shot algorithm, summarized in
Algorithm \ref{alg:oneshot}. The central planner first chooses $K$ queries according to a D-optimal design subroutine, $\textsc{D-Optimal-Design}(K)$, which ensures sufficient exploration. The planner then collects each agent's preferences between two candidate allocations. These preferences are used to estimate the agents' cost parameter. Finally, the central planner determines VCG allocations and payments based on the agents' learned costs.

\begin{algorithm}
	\caption{One-shot game: VCG mechanism with preference feedback}
	\label{alg:oneshot}
	\begin{algorithmic}[1]
	\State \textbf{Input:} $K$
	\State Select $K$ comparisons for each agent $i$: 
        \Statex $\{(a_{i,k},a'_{i,k})\}_{k=1}^K \gets \textsc{D-Optimal-Design}(K)$.
	\State Collect preferences $\{ y_{i,k}\}_{k=1}^K$ from  each agent $i$.
	\State Estimate $\hat\theta_i$ from $\mathcal D_i$ via Equation \eqref{eq:mle}.
	\State Compute $\hat{a}(\mathcal D)$ and $p_i(\hat{a}(\mathcal D)$ for all $i$ via~\cref{eq:vcg-learn}.
	\end{algorithmic}
\end{algorithm}
\subsection{Theoretical guarantees}
\label{sec:theoretical}
In this section, we provide bounds for truthfulness, individual rationality, and efficiency in the one-shot game. These bounds reflect the worst-case outcome from the mechanism's perspective, for example, the maximum utility an agent could gain from misreporting preferences. To show individual rationality, we make the natural assumption that agents do not incur any cost when they are not part of an allocation. 
	\begin{assumption}
		There exists an allocation $a^0\in\mathcal A$ with $\phi_i(a^0)=0$, for all $i\in[N]$.
	\label{as:zero-allocation}
	\end{assumption}
 
Next, we show that Algorithm \ref{alg:oneshot} is approximately truthful, individually rational, and efficient. We use $\bar{\mathcal D}$ to denote truthful feedback according to the Bradley--Terry model \cref{eq:bradley-terry}, and $\mathcal D$ for arbitrary feedback.
\begin{theorem}
	\label{th:approximate}
		Let \cref{as:linear-cost} hold, and suppose the mechanism makes $K> d(d+1)/2$ preference queries to each agent $i\in[N]$. Then, with probability at least $1-\delta$, \cref{alg:oneshot} satisfies:
	\begin{enumerate}[label={\arabic*)}]
		\item \textbf{Truthfulness:} An agent's utility gain for arbitrary (possibly strategic) feedback is at most: $$u_i(\hat{a}(\mathcal D_i,\mathcal D_{-i}))-u_i(\hat{a}(\bar{\mathcal D_i},\mathcal D_{-i}))\leq \epsilon_K(\delta).$$
		\item \textbf{Individual rationality:} Under \cref{as:zero-allocation}, a truthful agent's utility is at least: $$u_i(\hat{a}(\bar{\mathcal D}_i,\mathcal D_{-i}))\geq-\epsilon_K(\delta).$$
		\item \textbf{Efficiency:} With all agents truthful, the gap to the optimal social cost $J(a^*)$ is at most: $$J(\hat{a}(\bar{\mathcal D}))-J(a^*)\leq N\epsilon_K(\delta/ N).$$
	\end{enumerate}
	The bound $\epsilon_K(\cdot)\in\tilde{\mathcal O}\left(d\sqrt{\log(1/\cdot)/K}\right)$ holds after $K$ queries per agent.
	\end{theorem}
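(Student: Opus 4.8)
The plan is to mirror the three classical VCG arguments and to inject the learning error through the difference bound of \cref{lem:epsilon} wherever an estimated cost difference replaces a true one. For truthfulness, I would first exploit the Groves structure of the pivot payment: the term $\min_{a\in\mathcal F}\hat J_{-i}(a;\mathcal D_{-i})$ is independent of agent $i$'s report, so it cancels in the utility difference. Writing $G(a):=\hat J_{-i}(a;\mathcal D_{-i})+c_i(a)$ and $\hat G(a):=\hat J_{-i}(a;\mathcal D_{-i})+\hat c_i(a;\bar{\mathcal D}_i)$, a direct computation with $u_i(a)=p_i(a)-c_i(a)$ gives $u_i(\hat a(\mathcal D_i,\mathcal D_{-i}))-u_i(\hat a(\bar{\mathcal D}_i,\mathcal D_{-i}))=G(\hat a(\bar{\mathcal D}_i,\mathcal D_{-i}))-G(\hat a(\mathcal D_i,\mathcal D_{-i}))$. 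Since the truthful allocation $\hat a(\bar{\mathcal D}_i,\mathcal D_{-i})$ minimizes $\hat G$ over $\mathcal F$, the associated $\hat G$-difference is non-positive, and the residual gap between $G$ and $\hat G$ at the two allocations is precisely $(c_i(\hat a(\bar{\mathcal D}_i,\mathcal D_{-i}))-c_i(\hat a(\mathcal D_i,\mathcal D_{-i})))-(\hat c_i(\hat a(\bar{\mathcal D}_i,\mathcal D_{-i});\bar{\mathcal D}_i)-\hat c_i(\hat a(\mathcal D_i,\mathcal D_{-i});\bar{\mathcal D}_i))$, which \cref{lem:epsilon} bounds by $\epsilon_K(\delta)$. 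Note this part needs neither \cref{as:zero-allocation} nor a union bound, since it involves only agent $i$'s own estimate.

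For efficiency, I would telescope the social-cost gap as $J(\hat a(\bar{\mathcal D}))-J(a^*)=\sum_{i=1}^N\bigl(c_i(\hat a(\bar{\mathcal D}))-c_i(a^*)\bigr)$ and compare with the estimated objective. Applying \cref{lem:epsilon} to each agent with the comparison pair $(\hat a(\bar{\mathcal D}),a^*)$, and taking a union bound over the $N$ agents (which accounts for the confidence level $\delta/N$), yields $c_i(\hat a(\bar{\mathcal D}))-c_i(a^*)\le \hat c_i(\hat a(\bar{\mathcal D});\bar{\mathcal D}_i)-\hat c_i(a^*;\bar{\mathcal D}_i)+\epsilon_K(\delta/N)$. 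Summing over $i$ and using that $\hat a(\bar{\mathcal D})$ minimizes $\hat J(\cdot;\bar{\mathcal D})$ over $\mathcal F$ makes the aggregated estimated difference $\hat J(\hat a(\bar{\mathcal D});\bar{\mathcal D})-\hat J(a^*;\bar{\mathcal D})$ non-positive, leaving exactly the claimed $N\epsilon_K(\delta/N)$.

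Individual rationality is the step I expect to be the main obstacle, and the only one that genuinely requires \cref{as:zero-allocation}. Here the null allocation $a^0$, with $\phi_i(a^0)=0$ and hence $c_i(a^0)=\hat c_i(a^0;\cdot)=0$, plays the role of the agent's outside option. I would first establish that the truthful agent's \emph{estimated} utility is non-negative via the standard pivot argument that benchmarks the chosen allocation against $a^0$, and then convert this estimated utility into the true utility $u_i$ by a single application of the difference bound of \cref{lem:epsilon} anchored at $a^0$, at a cost of one $\epsilon_K(\delta)$. The delicate point is correctly pairing the pivot term $\min_{a\in\mathcal F}\hat J_{-i}(a;\mathcal D_{-i})$ with the baseline induced by $a^0$, so that the true cost $c_i(\hat a(\bar{\mathcal D}_i,\mathcal D_{-i}))$ entering the utility is fully controlled by estimated quantities; this is exactly where \cref{as:zero-allocation} is indispensable and where the sign bookkeeping must be handled with care. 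Finally, since each part invokes the high-probability event of \cref{lem:epsilon} (with the efficiency part using the per-agent union bound at level $\delta/N$), all three bounds hold simultaneously with probability at least $1-\delta$, giving the stated guarantee.
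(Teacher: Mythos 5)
Your proposal is correct and follows essentially the same route as the paper's proof: truthfulness via the Groves cancellation of the pivot term and the minimizing property of the truthful allocation, with the residual $G-\hat G$ gap bounded by \cref{lem:epsilon}; efficiency via per-agent applications of \cref{lem:epsilon} at level $\delta/N$ combined with optimality of $\hat a(\bar{\mathcal D})$ under $\hat J$; and individual rationality via the pivot inequality plus anchoring at $a^0$ under \cref{as:zero-allocation}. The only part you leave as a sketch, the sign bookkeeping for individual rationality, is completed in the paper exactly as you outline it: $u_i(\hat a(\bar{\mathcal D}_i,\mathcal D_{-i}))\geq \hat c_i(\hat a;\bar{\mathcal D}_i)-c_i(\hat a)$, then adding $c_i(a^0)-\hat c_i(a^0;\bar{\mathcal D}_i)=0$ and applying \cref{lem:epsilon} to the resulting difference of differences.
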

    This theorem shows that truthfulness, individual rationality, and efficiency are preserved for sufficient preference queries. The error terms $\epsilon_K(\delta)$ are in line with statistical rates for learning from preferences \citep{scheid2024optimal}. Note that the error decreases as we increase the number of queries.
	
	\paragraph{Proof idea.} The full proof is given in \cref{app:approximate}. \textit{1) Truthfulness:} The argument is based on the payments of the VCG mechanism. In particular, the payment rule $p_i$ aligns agent $i$'s utility with the negative estimated social cost, that is, maximizing individual utility is equivalent to minimizing $\hat J(a;\mathcal D)$. As the mechanism selects the socially efficient allocation under truthful feedback, agents have no incentive to misreport. The only inefficiencies arise from the learning error $\epsilon_K(\delta)$, which we control using the bounds for preference-based learning from \cref{lem:epsilon}.\\
    \textit{2) Individual rationality:} The VCG payment ensures that a truthful agent has non-negative utility, deteriorated only by the learning error $\epsilon_K(\delta)$. Unlike for numerical feedback, we need \cref{as:zero-allocation} to rule out constant terms in the cost function, which would be unidentifiable from preferences.\\
    \textit{3) Efficiency:} Finally, when considering the social cost, learning errors of the $N$ agents add up. Taking the union bound over all agents results in at most a gap of $N\epsilon_K(\delta/N)$, relative to the optimal social cost.
    
	\section{MULTI-ROUND GAME: ALGORITHM AND ANALYSIS}\label{sec:multi-round}
    Next, we consider a multi-round game, which extends the one-shot game to an online setting. The mechanism proceeds in stages that consist of an \emph{exploration phase}, where the planner queries agents' preferences, followed by an \emph{exploitation phase}, where the planner implements the allocation according to the VCG mechanism. This setting is closely related to no-regret learning in linear bandits \citep{lattimore2020bandit}, where the learner must carefully balance exploration and exploitation to select informative queries while ensuring low regret. An example of such a setting is local electricity markets, where the planner must learn consumers' preferences for energy adjustments online, while also deploying such adjustments in real-time to ensure grid stability.

    In order to study the multi-round game, we extend the results from the one-shot game. As before, we assume a linear cost model (Assumption \ref{as:linear-cost}) for agents, \revision{with the cost parameters fixed across rounds}.

    \subsection{Algorithm}
    We introduce the multi-round algorithm, summarized in \cref{alg:multiround}. Inspired by \citet{kandasamy2023vcg}, the algorithm alternates between exploration phases for learning and progressively longer exploitation phases with VCG allocations and payments. In contrast to their work, which assumes that agents give numerical feedback over finitely many allocations, we consider preference feedback over a compact allocation space. Their analysis does therefore not extend trivially to our setting and requires the methods developed in \cref{sec:one-shot}.
    
    The multi-round algorithm differs from the one-shot setting in several ways. During exploration, the planner makes payments to the agents to ensure individual rationality in every round. These payments are set to $2c_{\max} = 2BL$, the maximum cost under \cref{as:linear-cost}, for each queried allocation pair. Moreover, each stage $s$ consists of $K$ exploration rounds and $M_s$ exploitation rounds, where $M_s = \lfloor \frac{5}{6} K \sqrt{s} \rfloor$, so that exploitation phases progressively become longer as estimates improve. 
    
	\begin{algorithm}
	\caption{Multi-round game: VCG mechanism with preference feedback}
	\label{alg:multiround}
	\begin{algorithmic}[1]
    \State \textbf{Input:} $K$, $T$
    \State $t \leftarrow 1$, $s \leftarrow 1$
    \While{$t \leq T$}
    \Statex \textcolor{blue}{\phantom{whi}\texttt{/* Exploration phase\hfill/*}}
    \State \parbox[t]{0.85\columnwidth}{Select $K$ comparisons for each agent $i$: 
        \Statex $\{(a_{i,k},a'_{i,k})\}_{k=1}^K \gets \textsc{D-Optimal-Design}(K)$.}
    \State  \parbox[t]{0.85\columnwidth}{Collect queried preferences $\{y_{i,k}^s\}_{k=1}^K$ from each agent $i$.}
    \State  Pay $2Kc_{\max}$ to each agent $i$.
    \State $t \leftarrow t + K$
    \State \parbox[t]{0.85\columnwidth}{Estimate $\hat\theta_i^s$ from $\mathcal D_i^s$ via Equation \cref{eq:mle}.}
    \Statex \textcolor{blue}{\phantom{whi}\texttt{/* Exploitation phase\hfill /*}}
    \State \parbox[t]{0.85\columnwidth}{Compute $\hat{a}_t(\mathcal D^s)$ and $p_i(\hat a_t (\mathcal D^s))$ for all $i$ via \cref{eq:vcg-learn} during $M_s=\lfloor\frac{5}{6} K\sqrt{s}\rfloor$ rounds.}
    \State $t \leftarrow t + M_s$
    \State $s \leftarrow s + 1$
    \EndWhile
	\end{algorithmic}
	\end{algorithm}
    
    \subsection{Theoretical guarantees}
    In the multi-round game, we focus on cumulative quantities over rounds $t = 1, \dots, T$. Let the cumulative utility of agent $i$ be defined as $\bar U_{i}(T)=\sum_{t=1}^T u_i(\hat a_t(\bar{\mathcal D_i},\mathcal D_{-i}))$, where agent $i$ is truthful in every round. Similarly, let $U_{i}(T)= \sum_{t=1}^T u_i(\hat a_t(\mathcal D_i,\mathcal D_{-i}))$ be the cumulative utility of agent $i$, when deviating from truthful feedback in at least one round. Furthermore, we define welfare regret as:
	\begin{equation}
	\label{eq:welfare-regret}
		R^w(T)=\sum_{t=1}^T \bigl(J(\hat a_t(\bar{\mathcal D}))-J(a^*)\bigr),
	\end{equation}
	where $\hat a_t(\bar{\mathcal D})$ is the mechanism's allocation computed from truthful feedback and $a^*$ is the optimal allocation. Note that $R^w(T)$ also includes allocation costs of agents during exploration rounds.

    Our goal is to provide regret bounds for these cumulative quantities, again from the mechanism's perspective. We show sublinear regret bounds, which imply asymptotic guarantees for truthfulness, individual rationality, and efficiency.
    
	\begin{theorem} \label{th:asymptotic} Let \cref{as:linear-cost} hold, and suppose the mechanism makes $K> d(d+1)/2$ preference queries to each agent $i\in[N]$. Then, after $T$ rounds with probability $1-\delta$, \cref{alg:multiround} satisfies:
	\begin{enumerate}[label={\arabic*)}]
		\item \textbf{Truthfulness:} An agent's utility gain from deviating in at least one round is upper bounded by: $$U_{i}(T)-\bar U_{i}(T)\in \tilde{\mathcal O}\left(dT^{2/3}\sqrt{\log(1/\delta)}\right).$$
		\item \textbf{Individual rationality:} Under \cref{as:zero-allocation}, a truthful agent's utility is lower bounded by: $$-\bar U_{i}(T)\in \tilde{\mathcal O}\left(dT^{2/3}\sqrt{\log(1/\delta)}\right).$$
		\item \textbf{Efficiency:} With all agents truthful, the welfare regret \cref{eq:welfare-regret} is upper bounded by: $$R^w(T)\in \tilde{\mathcal O}\left(NdT^{2/3}\sqrt{\log(N/\delta)}\right).$$
	\end{enumerate}
	\end{theorem}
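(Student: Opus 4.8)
The plan is to lift the per-round guarantees of \cref{th:approximate} to cumulative bounds by summing over stages, exploiting that the estimation error shrinks as data accumulates while the exploitation phases lengthen to compensate. Throughout I treat the stage-$s$ estimate $\hat\theta_i^s$ as computed from all $sK$ samples gathered through stage $s$, so that \cref{lem:epsilon} yields a per-round error $\epsilon_{sK}(\delta)\in\tilde{\mathcal O}(d\sqrt{\log(1/\delta)/(sK)})$ during stage $s$. Since $K$ is a fixed input, all $K$-dependent prefactors below are absorbed into the $\tilde{\mathcal O}$ notation.

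First I would fix the stage/round bookkeeping. Writing $S$ for the number of stages completed by time $T$, we have $T=\sum_{s=1}^{S}(K+M_s)$ with $M_s=\lfloor\tfrac56 K\sqrt s\rfloor$. Because $\sum_{s\le S}\sqrt s=\Theta(S^{3/2})$, this gives $T=\Theta(KS^{3/2})$ and hence $S=\Theta((T/K)^{2/3})$; this single fact is the origin of the $T^{2/3}$ rate. The second ingredient is the key balance $M_s\,\epsilon_{sK}(\delta)\in\tilde{\mathcal O}(d\sqrt{K\log(1/\delta)})$, which is \emph{constant in $s$} because $M_s\propto\sqrt s$ exactly cancels the $\epsilon_{sK}\propto 1/\sqrt s$ decay — this is precisely why the schedule $M_s=\lfloor\tfrac56 K\sqrt s\rfloor$ is chosen. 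Summing over stages therefore gives $\sum_{s=1}^S M_s\,\epsilon_{sK}(\delta)\in\tilde{\mathcal O}(d\,S\sqrt{K\log(1/\delta)})=\tilde{\mathcal O}(dT^{2/3}\sqrt{\log(1/\delta)})$, which is the target rate for all three parts up to the $N$- and $\log N$-factors.

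I would then treat the three parts round by round. For \textbf{truthfulness}, note that during exploration the allocation is fixed by \textsc{D-Optimal-Design} and the payment is the constant $2c_{\max}$ per queried pair, so agent $i$'s exploration utility is identical whether it reports truthfully or not; exploration contributes exactly zero to $U_i(T)-\bar U_i(T)$. The deviation can only corrupt $\hat\theta_i^s$ and hence the exploitation allocations, and the truthfulness bound of \cref{th:approximate} limits each such round in stage $s$ by $\epsilon_{sK}(\delta)$, so summing gives the stated rate. For \textbf{individual rationality}, exploration contributes non-negatively because the fixed payment $2Kc_{\max}$ dominates the exploration cost $\le 2Kc_{\max}$ under \cref{as:linear-cost}, while the individual-rationality bound of \cref{th:approximate} (invoking \cref{as:zero-allocation}) lower-bounds each exploitation round by $-\epsilon_{sK}(\delta)$; summing yields $\bar U_i(T)\ge-\tilde{\mathcal O}(dT^{2/3}\sqrt{\log(1/\delta)})$. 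For \textbf{efficiency}, the exploration rounds are not socially optimized but each contributes at most $Nc_{\max}$ to the regret, for a total of $Nc_{\max}\,KS\in\tilde{\mathcal O}(NT^{2/3})$; each exploitation round contributes at most $N\epsilon_{sK}(\delta/N)$ by the efficiency bound of \cref{th:approximate}, and summing gives $\tilde{\mathcal O}(NdT^{2/3}\sqrt{\log(N/\delta)})$, which dominates the exploration term.

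Finally, since \cref{lem:epsilon} holds per stage only with probability $1-\delta$, I would take a union bound over the $S=\Theta((T/K)^{2/3})$ stages, replacing $\delta$ by $\delta/S$; this inflates each $\epsilon_{sK}$ by a factor $\sqrt{\log S}=\tilde{\mathcal O}(\sqrt{\log T})$ that is absorbed into $\tilde{\mathcal O}$, leaving the explicit $\sqrt{\log(1/\delta)}$ dependence intact. The main obstacle is the multi-round truthfulness step: unlike in the one-shot game, a strategic agent's misreports in early exploration phases propagate into the estimates — and hence the allocations and payments — of all later stages, so the deviation cannot be analyzed in isolation. The resolution is the observation that the VCG incentive inequality $u_i(\hat a(\mathcal D_i,\mathcal D_{-i}))-u_i(\hat a(\bar{\mathcal D}_i,\mathcal D_{-i}))\le\epsilon_{sK}(\delta)$ requires only the accuracy of the counterfactual \emph{always-truthful} estimate (which \cref{lem:epsilon} controls) and remains valid for whatever allocation the corrupted estimate induces; this decouples the per-round bounds and lets them sum cleanly across stages.
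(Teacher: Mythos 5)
Your proposal is correct and follows essentially the same route as the paper: split rounds into exploration and exploitation, note that exploration utilities are feedback-independent (truthfulness) and covered by the fixed payments (individual rationality), bound each exploitation round by the one-shot guarantees of \cref{th:approximate} with the error evaluated at the current cumulative query count, and let the schedule $M_s=\lfloor\tfrac{5}{6}K\sqrt{s}\rfloor$ balance the decaying error against the lengthening exploitation phases. The differences are in execution, and they mildly favor you. First, where the paper converts stage indices to round indices via \cref{lem:stage} (borrowed from \citet{kandasamy2023vcg}) and then sums $t^{-1/3}$ over rounds, you group the exploitation sum by stage, using $T=\Theta(KS^{3/2})$ and the observation that $M_s\,\epsilon_{sK}(\delta)$ is constant in $s$; this is an equivalent but more self-contained computation that avoids the external lemma. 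Second, you are more careful on a point the paper glosses over: the confidence bound of \cref{lem:epsilon} must hold simultaneously at every stage, which requires a union bound over the $S=\Theta((T/K)^{2/3})$ stages; your $\delta/S$ inflation, costing only a $\sqrt{\log T}$ factor absorbed by $\tilde{\mathcal O}$, supplies this, whereas the paper applies $\epsilon_{Q_t}(\delta)$ at every round with the same $\delta$ without comment. Finally, your closing observation --- that the one-shot truthfulness inequality only requires accuracy of the counterfactual always-truthful estimate $\hat c_i(\cdot;\bar{\mathcal D}_i^{s_t})$ and therefore survives the propagation of early misreports into later stages' estimates, allocations, and payments --- is precisely the reasoning the paper relies on implicitly when it applies the one-shot bound \cref{eq:oneshot-truthful} per exploitation round; making it explicit strengthens the argument rather than changing it.
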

    Compared to the approximate guarantees in the one-shot game, this theorem implies asymptotic truthfulness, individual rationality and efficiency. This follows directly from sublinearity of the upper bounds, for example, $R^w(T)/T\rightarrow 0$ as $T\rightarrow\infty$ for welfare regret. These rates are in line with works that consider numerical feedback and show sublinear upper bounds of $\tilde O(T^{2/3})$ \citep{babaioff2010truthful}. This aligns with the broader insight that, under the Bradley--Terry model, learning from preference feedback is not fundamentally harder than learning from numerical feedback \citep{ailon2014reducing}.
	\paragraph{Proof idea.} The full proof is given in \cref{app:asymptotic}. \textit{1) Truthfulness:} Utilities in exploration rounds are not affected by strategic behavior as allocations and payments are selected independently of the preference feedback. In exploitation rounds, the utility gain per round is upper bounded by \cref{th:approximate} with $\epsilon_{Q_t}(\delta)$, where $Q_t$ denotes the number of preference queries at time $t$. Summing over all rounds $T$ yields the finite-time bound.\\ \textit{2) Individual rationality:} Fixed payments of $2c_{\max}$ for a pairwise comparison guarantee non-negative utility in exploration rounds. In exploitation rounds, learning errors of $\epsilon_{Q_t}(\delta)$ per round sum up as above.\\ \textit{3) Efficiency:} Due to repeated exploration, the learning error decreases over time, thus the mechanism converges to efficient allocations. The convergence rate is given by $M_s$, which ensures sublinear welfare regret by progressively longer exploitation phases as estimates get better.

    Our theoretical results show that even when agents only provide preference feedback, a mechanism can be implemented that converges to a socially efficient allocation. A practical application for this setting arises in local electricity markets, which we discuss next.
     
\section{EXPERIMENTS}
\label{sec:experiments}
An emerging challenge with the increased penetration of intermittent renewable energy in electric grids is the balancing of energy supply and demand. A promising approach is demand response, where a grid operator procures energy flexibility from domestic consumers through recurring \textit{demand response events}. The allocation of this flexibility can be organized as local electricity markets \citep{tsaousoglou2022market}. This naturally raises questions about strategic behavior among consumers in these markets. Accordingly, a range of works study this problem and propose various market mechanisms \citep{tsaousoglou2021mechanism,fochesato2022stackelberg,crowley2025can}. However, an open challenge is how consumers can communicate the costs they incur when providing energy flexibility, which are difficult to quantify \citep{abedrabboh2023applications}. We address this challenge in our setting, where consumers express preferences over allocations of energy flexibility.
\subsection{Simulation setup}
We consider a group of consumers who adjust their thermal loads, such as heating and cooling, during demand response events. The consumers participate in a local electricity market, where a central planner coordinates the demand response. The planner allocates the energy flexibility $a=(a_1,\dots,a_N)$ with the goal to minimize the social cost. The individual cost of consumers is given by their discomfort when the room temperature deviates from their preferred temperature due to their energy flexibility $a_i$ (kWh). We assume that each consumer has up to $d$ rooms, which each contributes to the discomfort as follows:
\[
	c_i(a) = \langle \theta_i^*, \phi_i(a_i) \rangle=\sum_{l=1}^d\theta_{i,l}^* a_{i,l}^2,
\]
where $\theta_i^*$ (\$/kWh$^2$) are positive cost parameters and $\phi_i(a_i)$ (kWh$^2$) is a quadratic feature map \citep{li2011optimal}.\footnote{While our theoretical guarantees allow costs $c_i$ to depend on the entire allocation $a$, our simulations consider restricted feature maps $\phi_i(a_i)$ for illustration purposes.} When a demand response event occurs, the central planner must choose an allocation from the feasible set $\mathcal F$, defined as
\[\mathcal F=\left\{a\in\mathcal A\;\middle|\;\sum_{i=1}^N\sum_{l=1}^d a_{i,l}=P\right\},\]
where $P$ (kWh) is the total energy flexibility required by the grid. In this setting, preference queries can be thought of as simulated demand response events, such as questionnaires or tests.
\subsection{Numerical results}
\label{sec:numerical}
In the simulation, energy flexibilities belong to a discrete allocation set of size $|\mathcal A_i|=15$ for each agent $i$. \revision{Note that this implies $15^2$ possible comparisons per agent, making exhaustive exploration infeasible. The discretization allows us to compute an exact D-optimal design, even though the theoretical results hold for any compact allocation space. See \cref{app:design} for a discussion of approximate optimal designs in the case of infinite sets.} The cost parameters are generated synthetically with random samples between 0.1 and 0.5\;\$/kWh$^2$ \citep{safdar2019costs}, reflecting heterogeneity across consumers and rooms. We assume a constant flexibility requirement $P=15$\;kWh for each demand response event. An overview of parameters and runtimes is given in \cref{app:experiments}. Next, we present our simulation results, illustrating truthfulness and efficiency.
 
To assess truthfulness in the one-shot game, we compare a given agent’s utility under truthful and untruthful preference feedback. We focus on untruthful feedback where preferences are sampled from the Bradley--Terry model \cref{eq:bradley-terry}, but for biased cost parameters $\theta_i=\theta_i^*+\Delta\theta$, where $\Delta\theta\in\{-0.2,-0.1,0.1,0.2\}$. As a benchmark, we compare utilities for allocations and payments computed under the pay-as-bid (first-price) mechanism based on the learned costs. Figure~\ref{fig:truthfulness} shows that, under the pay-as-bid mechanism, agents benefit from overstating their costs. Under the VCG mechanism, however, untruthful feedback can only increase an agent's utility when the number of queries $K$ is small.

\begin{figure}[t!]
	\centering
	\vspace{.3in}
    \includegraphics[width=\columnwidth]{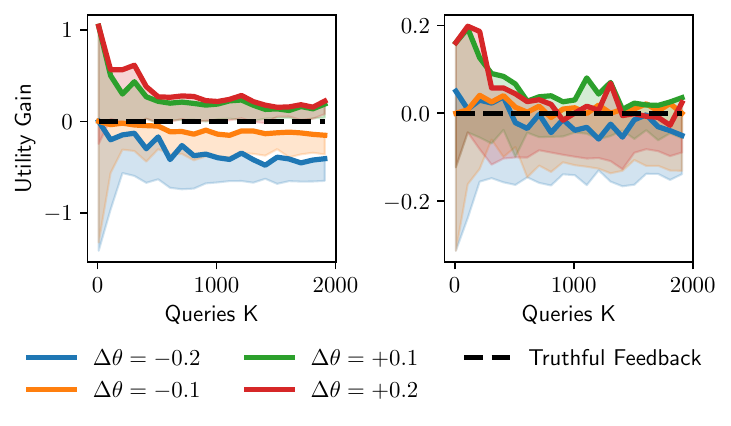}

    \vspace{0.1cm}
    
    \begin{subfigure}[t]{0.5\columnwidth}
    \includegraphics[width=1.03\textwidth]{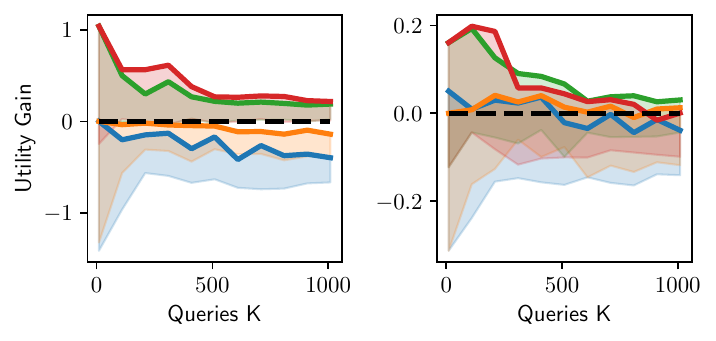}
    \caption{Pay-as-Bid}
    \label{fig:utility-vcg}
\end{subfigure}
\begin{subfigure}[t]{0.48\columnwidth}
    \includegraphics[width=1.04\textwidth]{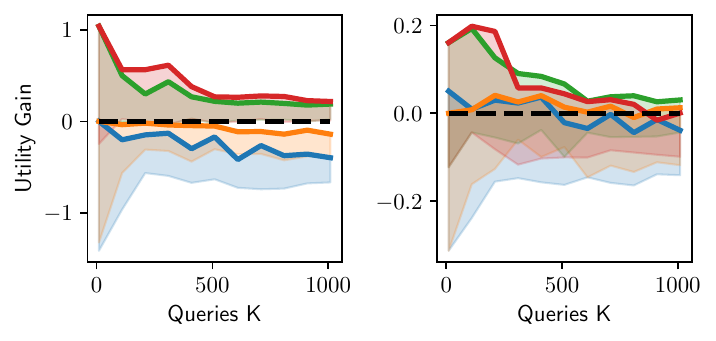}
    \caption{VCG}
    \label{fig:utility-pay-as-bid}
\end{subfigure}

	\vspace{.3in}
	\caption{Impact of misreporting on a given agent's utility in the one-shot game under (a) the pay-as-bid mechanism and (b) the VCG mechanism. We show the utility gain $u_i(\hat a(\mathcal D_i,\mathcal D_{-i})) - u_i(\hat a(\bar{\mathcal D}_i,\mathcal D_{-i}))$ across independent runs, with the worst-case outcome highlighted.}
	\label{fig:truthfulness}
\end{figure}

\begin{table}
\centering
\caption{Maximum utility gain of a given agent with untruthful feedback for $K=1000$ queries.}
\begin{tabular}{c|cc|cc}
\textbf{Deviation} & \multicolumn{2}{c|}{\textbf{Pay-as-Bid}} & \multicolumn{2}{c}{\textbf{VCG}} \\

 & Max & Std & Max & Std \\
\hline\hline
$-0.2$ & $-0.40$ & 0.07 & $-0.04$ & 0.03 \\
$-0.1$ & $-0.14$ & 0.06 & $+0.01$ & 0.03 \\
$+0.1$ & $+0.19$ & 0.05 & $+0.03$ & 0.02 \\
$+0.2$ & $+0.22$ & 0.05 & $+0.01$ & 0.03 \\
\end{tabular}
\label{tab:gain}
\end{table}

\cref{tab:gain} provides an overview of a given agent's maximum utility gain for $K=1000$ preference queries. Under the VCG mechanism, the incentive to deviate is negligible, once sufficient preferences are queried. Standard deviations are small, indicating that the agent's utility gains across runs become concentrated as the number of queries increases. This trend generalizes to any agent, confirming that the mechanism elicits truthful preference feedback.

Finally, we assess efficiency under truthful feedback both in the one-shot and multi-round settings. In the one-shot game, Figure \ref{fig:efficiency-one-shot} shows that the worst-case social cost gap decreases at a rate of $\tilde{\mathcal O}(K^{-1/2})$. In the multi-round game, Figure \ref{fig:efficiency-multi-round} shows that the average welfare regret decreases a rate of $R^w(T)/T\in\tilde{\mathcal O}(T^{-1/3})$. Both rates are consistent with our theoretical guarantees. Overall, these results confirm that the mechanism converges to a socially efficient allocation by learning agents' costs from preference feedback.

\begin{figure}
	\centering
	\vspace{.3in}
    \begin{subfigure}[t]{0.48\columnwidth}
    \includegraphics[width=\textwidth]{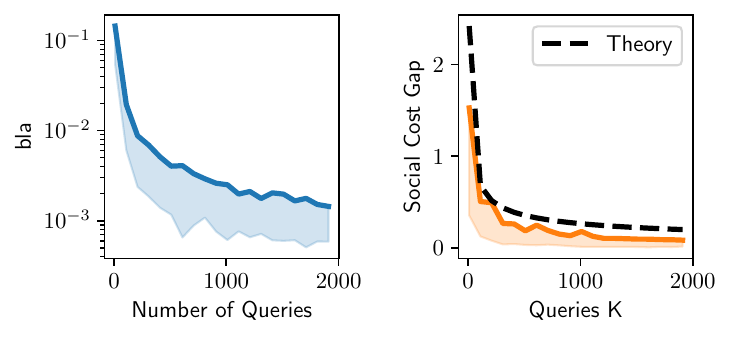}
    \caption{One-Shot Game}
    \label{fig:efficiency-one-shot}
\end{subfigure}
\hfill
\begin{subfigure}[t]{0.5\columnwidth}
    \includegraphics[width=\textwidth]{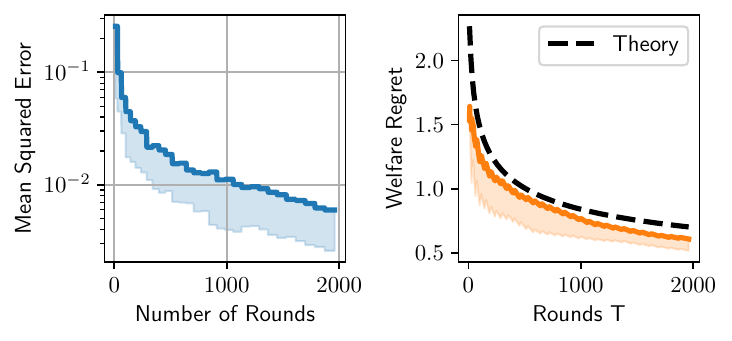}
    \caption{Multi-Round Game}
    \label{fig:efficiency-multi-round}
\end{subfigure}
	
	\vspace{.3in}
	\caption{Mechanism's performance in terms of (a) social cost gap $J(\hat a(\bar{\mathcal D}))-J(a^*)$ in the one-shot game and (b) average welfare regret $R^w(T)/T$ in the multi-round game. Both are normalized by the optimal social cost $J(a^*)$.}
	\label{fig:efficiency}
\end{figure}

\section{CONCLUSION}
In this paper, we studied the problem of designing preference queries and payment rules to ensure truthfulness, individual rationality, and efficiency. We proposed an algorithm that integrates preference-based learning with VCG, and proved that it satisfies these properties approximately in a one-shot setting and asymptotically in an online setting. Our method improves practicality by removing the need for bid functions or numerical feedback, while accommodating large allocation spaces. Lastly, we validated our method in a numerical case study on demand response in local electricity markets, demonstrating its ability to elicit truthful preferences and converge to socially efficient allocations.

\revision{Based on these results, several directions for future work emerge. First, our mechanism builds on VCG, which is known to suffer from limitations such as lack of budget balance and potential revenue loss for the central planner \citep{ausubel2006lovely}. Future work could therefore explore alternative mechanism designs, such as core-selecting or ascending auctions \citep{ausubel2002ascending}, within a preference-based learning setup. Second, while the Bradley--Terry model is a standard cardinal preference model widely used in the literature, it remains restrictive in settings where preferences exhibit richer structure. An interesting direction would be to incorporate models that account for full reward distributions or alignment distortions, as recently studied by \citet{golz2025distortion}. Extending the theoretical guarantees to more general models, including time-varying costs or richer function classes (e.g., RKHS), and establishing robustness guarantees under model misspecification remain important open challenges. Finally, validating the approach using real-world preference data in simulations  can highlight its practical effectiveness.}

\subsubsection*{Acknowledgments}
The authors thank the reviewers for their constructive and insightful feedback. The authors also thank the members of the Sycamore lab for valuable discussions and feedback. This work was supported in part by the grant Learning for Safety in Data-Driven Control (Grant No. 200020\_207984).

\bibliography{bibliography.bib}

@article{safdar2019costs,
  title={Costs of demand response from residential customers’ perspective},
  author={Safdar, Madia and Hussain, Ghulam Amjad and Lehtonen, Matti},
  journal={Energies},
  volume={12},
  number={9},
  pages={1617},
  year={2019},
  publisher={MDPI}
}

@article{ausubel2002ascending,
  title={Ascending auctions with package bidding},
  author={Ausubel, Lawrence and Milgrom, Paul},
  journal={Frontiers of theoretical economics},
  volume={1},
  number={1},
  pages={1019},
  year={2002},
  publisher={Berkeley Electronic Press}
}

@article{golz2025distortion,
  title={Distortion of AI Alignment: Does Preference Optimization Optimize for Preferences?},
  author={G{\"o}lz, Paul and Haghtalab, Nika and Yang, Kunhe},
  journal={arXiv preprint arXiv:2505.23749},
  year={2025}
}

@inproceedings{li2011optimal,
	author = {Li, Na and Chen, Lijun and Low, Steven H},
	booktitle = {2011 IEEE power and energy society general meeting},
	date-added = {2025-09-15 14:17:23 +0200},
	date-modified = {2025-09-15 14:17:23 +0200},
	organization = {IEEE},
	pages = {1--8},
	title = {Optimal demand response based on utility maximization in power networks},
	year = {2011}}

@article{kiefer1960equivalence,
	author = {Kiefer, Jack and Wolfowitz, Jacob},
	date-added = {2025-08-01 09:17:58 +0200},
	date-modified = {2025-08-01 09:17:58 +0200},
	journal = {Canadian Journal of Mathematics},
	pages = {363--366},
	publisher = {Cambridge University Press},
	title = {The equivalence of two extremum problems},
	volume = {12},
	year = {1960}}

@article{schlaginhaufen2025efficient,
	author = {Schlaginhaufen, Andreas and Ouhamma, Reda and Kamgarpour, Maryam},
	date-added = {2025-08-01 09:15:34 +0200},
	date-modified = {2025-08-01 09:15:34 +0200},
	journal = {arXiv preprint arXiv:2506.09508},
	title = {Efficient Preference-Based Reinforcement Learning: Randomized Exploration Meets Experimental Design},
	year = {2025}}

@article{crowley2025can,
	author = {Crowley, Bennevis and Kazempour, Jalal and Mitridati, Lesia},
	date-added = {2025-06-24 10:31:35 +0200},
	date-modified = {2025-08-03 13:34:38 +0200},
	journal = {Applied Energy},
	number = {125154},
	publisher = {Elsevier},
	title = {How can energy communities provide grid services? A dynamic pricing mechanism with budget balance, individual rationality, and fair allocation},
	volume = {382},
	year = {2025}}

@article{abedrabboh2023applications,
	author = {Abedrabboh, Khaled and Al-Fagih, Luluwah},
	date-added = {2025-06-19 16:07:29 +0200},
	date-modified = {2025-08-03 13:33:36 +0200},
	journal = {Renewable and Sustainable Energy Reviews},
	number = {113016},
	publisher = {Elsevier},
	title = {Applications of mechanism design in market-based demand-side management: A review},
	volume = {171},
	year = {2023}}

@inproceedings{fochesato2022stackelberg,
	author = {Fochesato, Marta and Cenedese, Carlo and Lygeros, John},
	booktitle = {2022 IEEE 61st Conference on Decision and Control (CDC)},
	date-added = {2025-06-19 13:25:09 +0200},
	date-modified = {2025-06-19 13:25:09 +0200},
	organization = {IEEE},
	pages = {2487--2492},
	title = {A Stackelberg game for incentive-based demand response in energy markets},
	year = {2022}}

@article{tsaousoglou2022market,
	author = {Tsaousoglou, Georgios and Giraldo, Juan S and Paterakis, Nikolaos G},
	date-added = {2025-06-19 11:23:00 +0200},
	date-modified = {2025-08-03 13:32:55 +0200},
	journal = {Renewable and Sustainable Energy Reviews},
	number = {111890},
	publisher = {Elsevier},
	title = {Market mechanisms for local electricity markets: A review of models, solution concepts and algorithmic techniques},
	volume = {156},
	year = {2022}}

@article{li2020transactive,
	author = {Li, Sen and Lian, Jianming and Conejo, Antonio J and Zhang, Wei},
	date-added = {2025-06-19 10:36:43 +0200},
	date-modified = {2025-06-19 10:36:43 +0200},
	journal = {IEEE Control Systems Magazine},
	number = {4},
	pages = {26--52},
	publisher = {IEEE},
	title = {Transactive energy systems: The market-based coordination of distributed energy resources},
	volume = {40},
	year = {2020}}

@inproceedings{yue2009interactively,
	author = {Yue, Yisong and Joachims, Thorsten},
	booktitle = {Proceedings of the 26th Annual International Conference on Machine Learning},
	date-added = {2025-06-16 08:18:53 +0200},
	date-modified = {2025-06-16 08:18:53 +0200},
	pages = {1201--1208},
	title = {Interactively optimizing information retrieval systems as a dueling bandits problem},
	year = {2009}}

@book{karlin2017game,
	author = {Karlin, Anna R and Peres, Yuval},
	date-added = {2025-06-13 11:13:49 +0200},
	date-modified = {2025-06-13 11:13:49 +0200},
	publisher = {American Mathematical Soc.},
	title = {Game theory, alive},
	volume = {101},
	year = {2017}}

@article{groves1973incentives,
	author = {Groves, Theodore},
	date-added = {2025-06-03 22:21:56 +0200},
	date-modified = {2025-06-03 22:21:56 +0200},
	journal = {Econometrica: Journal of the Econometric Society},
	pages = {617--631},
	publisher = {JSTOR},
	title = {Incentives in teams},
	year = {1973}}

@article{clarke1971multipart,
	author = {Clarke, Edward H},
	date-added = {2025-06-03 22:21:24 +0200},
	date-modified = {2025-06-03 22:21:24 +0200},
	journal = {Public choice},
	pages = {17--33},
	publisher = {JSTOR},
	title = {Multipart pricing of public goods},
	year = {1971}}

@article{vickrey1961counterspeculation,
	author = {Vickrey, William},
	date-added = {2025-06-03 22:15:57 +0200},
	date-modified = {2025-06-03 22:15:57 +0200},
	journal = {The Journal of finance},
	number = {1},
	pages = {8--37},
	publisher = {JSTOR},
	title = {Counterspeculation, auctions, and competitive sealed tenders},
	volume = {16},
	year = {1961}}

@book{luce1959individual,
	author = {Luce, R Duncan and others},
	date-added = {2025-06-03 22:15:07 +0200},
	date-modified = {2025-06-03 22:15:07 +0200},
	publisher = {Wiley New York},
	title = {Individual choice behavior},
	volume = {4},
	year = {1959}}

@article{ausubel2006lovely,
	author = {Ausubel, Lawrence M and Milgrom, Paul and others},
	date-added = {2025-05-19 11:37:21 +0200},
	date-modified = {2025-05-19 11:37:21 +0200},
	journal = {Combinatorial auctions},
	number = {3},
	pages = {22--26},
	publisher = {Citeseer},
	title = {The lovely but lonely Vickrey auction},
	volume = {17},
	year = {2006}}

@book{lattimore2020bandit,
	author = {Lattimore, Tor and Szepesv{\'a}ri, Csaba},
	date-added = {2025-04-30 11:44:53 +0200},
	date-modified = {2025-04-30 11:44:53 +0200},
	publisher = {Cambridge University Press},
	title = {Bandit algorithms},
	year = {2020}}

@article{kandasamy2023vcg,
	author = {Kandasamy, Kirthevasan and Gonzalez, Joseph E and Jordan, Michael I and Stoica, Ion},
	date-added = {2025-04-30 10:36:41 +0200},
	date-modified = {2025-05-22 17:07:35 +0200},
	journal = {Journal of Machine Learning Research},
	number = {53},
	pages = {1--45},
	title = {{VCG} mechanism design with unknown agent values under stochastic bandit feedback},
	volume = {24},
	year = {2023}}

@article{chremos2024mechanism,
	author = {Chremos, Ioannis Vasileios and Malikopoulos, Andreas A},
	date-added = {2025-04-17 10:44:54 +0200},
	date-modified = {2025-04-17 10:44:54 +0200},
	journal = {IEEE Control Systems Magazine},
	number = {1},
	pages = {20--45},
	publisher = {IEEE},
	title = {Mechanism design theory in control engineering: A tutorial and overview of applications in communication, power grid, transportation, and security systems},
	volume = {44},
	year = {2024}}

@article{tsaousoglou2021mechanism,
	author = {Tsaousoglou, Georgios and Giraldo, Juan S and Pinson, Pierre and Paterakis, Nikolaos G},
	date-added = {2025-04-17 10:31:31 +0200},
	date-modified = {2025-04-17 10:31:31 +0200},
	journal = {IEEE transactions on smart grid},
	number = {3},
	pages = {2249--2260},
	publisher = {IEEE},
	title = {Mechanism design for fair and efficient DSO flexibility markets},
	volume = {12},
	year = {2021}}

@article{strzalecki2025stochastic,
  title={Stochastic choice theory},
  author={Strzalecki, Tomasz},
  journal={Cambridge Books},
  year={2025},
  publisher={Cambridge University Press}
}

@article{frank1956algorithm,
  title={An algorithm for quadratic programming},
  author={Frank, Marguerite and Wolfe, Philip and others},
  journal={Naval research logistics quarterly},
  volume={3},
  number={1-2},
  pages={95--110},
  year={1956},
  publisher={Wiley Subscription Services, Inc., A Wiley Company New York}
}

@book{pukelsheim2006optimal,
  title={Optimal design of experiments},
  author={Pukelsheim, Friedrich},
  year={2006},
  publisher={SIAM}
}

@article{hazan2016volumetric,
  title={Volumetric spanners: an efficient exploration basis for learning},
  author={Hazan, Elad and Karnin, Zohar},
  journal={The Journal of Machine Learning Research},
  volume={17},
  number={1},
  pages={4062--4095},
  year={2016},
  publisher={JMLR. org}
}

@article{yazdani2017strategic,
  title={Strategic behavior of multi-energy players in electricity markets as aggregators of demand side resources using a bi-level approach},
  author={Yazdani-Damavandi, Maziar and Neyestani, Nilufar and Shafie-khah, Miadreza and Contreras, Javier and Catalao, Joao PS},
  journal={IEEE Transactions on Power Systems},
  volume={33},
  number={1},
  pages={397--411},
  year={2017},
  publisher={IEEE}
}

@inproceedings{azar2024general,
  title={A general theoretical paradigm to understand learning from human preferences},
  author={Azar, Mohammad Gheshlaghi and Guo, Zhaohan Daniel and Piot, Bilal and Munos, Remi and Rowland, Mark and Valko, Michal and Calandriello, Daniele},
  booktitle={International Conference on Artificial Intelligence and Statistics},
  pages={4447--4455},
  year={2024},
  organization={PMLR}
}

@inproceedings{chowdhury2024differentially,
  title={Differentially private reward estimation with preference feedback},
  author={Chowdhury, Sayak Ray and Zhou, Xingyu and Natarajan, Nagarajan},
  booktitle={International Conference on Artificial Intelligence and Statistics},
  pages={4843--4851},
  year={2024},
  organization={PMLR}
}

@article{scheid2024optimal,
  title={Optimal design for reward modeling in rlhf},
  author={Scheid, Antoine and Boursier, Etienne and Durmus, Alain and Jordan, Michael I and M{\'e}nard, Pierre and Moulines, Eric and Valko, Michal},
  journal={arXiv preprint arXiv:2410.17055},
  year={2024}
}

@article{fallah2024fair,
  title={Fair allocation in dynamic mechanism design},
  author={Fallah, Alireza and Jordan, Michael and Ulichney, Annie},
  journal={Advances in Neural Information Processing Systems},
  volume={37},
  pages={125935--125966},
  year={2024}
}

@inproceedings{gois2025performative,
  title={Performative Prediction on Games and Mechanism Design},
  author={G{\'o}is, Ant{\'o}nio and Mofakhami, Mehrnaz and Santos, Fernando P and Lacoste-Julien, Simon and Gidel, Gauthier},
  booktitle={International Conference on Artificial Intelligence and Statistics},
  pages={1855--1863},
  year={2025},
  organization={PMLR}
}

@inproceedings{babaioff2010truthful,
  title={Truthful mechanisms with implicit payment computation},
  author={Babaioff, Moshe and Kleinberg, Robert D and Slivkins, Aleksandrs},
  booktitle={Proceedings of the 11th ACM conference on Electronic commerce},
  pages={43--52},
  year={2010}
}

@inproceedings{ailon2014reducing,
  title={Reducing dueling bandits to cardinal bandits},
  author={Ailon, Nir and Karnin, Zohar and Joachims, Thorsten},
  booktitle={International Conference on Machine Learning},
  pages={856--864},
  year={2014},
  organization={PMLR}
}

@inproceedings{zhang2024vickreyfeedback,
  title={{VickreyFeedback}: Cost-efficient data construction for reinforcement learning from human feedback},
  author={Zhang, Guoxi and Duan, Jiuding},
  booktitle={International Conference on Principles and Practice of Multi-Agent Systems},
  pages={351--366},
  year={2024},
  organization={Springer}
}

@article{lee2023aligning,
  title={Aligning text-to-image models using human feedback},
  author={Lee, Kimin and Liu, Hao and Ryu, Moonkyung and Watkins, Olivia and Du, Yuqing and Boutilier, Craig and Abbeel, Pieter and Ghavamzadeh, Mohammad and Gu, Shixiang Shane},
  journal={arXiv preprint arXiv:2302.12192},
  year={2023}
}

@inproceedings{pereira2019online,
  title={Online learning to rank for sequential music recommendation},
  author={Pereira, Bruno L and Ueda, Alberto and Penha, Gustavo and Santos, Rodrygo LT and Ziviani, Nivio},
  booktitle={Proceedings of the 13th ACM Conference on Recommender Systems},
  pages={237--245},
  year={2019}
}

@inproceedings{babaioff2009characterizing,
  title={Characterizing truthful multi-armed bandit mechanisms},
  author={Babaioff, Moshe and Sharma, Yogeshwer and Slivkins, Aleksandrs},
  booktitle={Proceedings of the 10th ACM conference on Electronic commerce},
  pages={79--88},
  year={2009}
}

@inproceedings{devanur2009price,
  title={The price of truthfulness for pay-per-click auctions},
  author={Devanur, Nikhil R and Kakade, Sham M},
  booktitle={Proceedings of the 10th ACM conference on Electronic commerce},
  pages={99--106},
  year={2009}
}

@article{ziegler2019fine,
  title={Fine-tuning language models from human preferences},
  author={Ziegler, Daniel M and Stiennon, Nisan and Wu, Jeffrey and Brown, Tom B and Radford, Alec and Amodei, Dario and Christiano, Paul and Irving, Geoffrey},
  journal={arXiv preprint arXiv:1909.08593},
  year={2019}
}

@article{rafailov2023direct,
  title={Direct preference optimization: Your language model is secretly a reward model},
  author={Rafailov, Rafael and Sharma, Archit and Mitchell, Eric and Manning, Christopher D and Ermon, Stefano and Finn, Chelsea},
  journal={Advances in neural information processing systems},
  volume={36},
  pages={53728--53741},
  year={2023}
}

@article{christiano2017deep,
  title={Deep reinforcement learning from human preferences},
  author={Christiano, Paul F and Leike, Jan and Brown, Tom and Martic, Miljan and Legg, Shane and Amodei, Dario},
  journal={Advances in neural information processing systems},
  volume={30},
  year={2017}
}

@article{bradley1952rank,
  title={Rank analysis of incomplete block designs: I. The method of paired comparisons},
  author={Bradley, Ralph Allan and Terry, Milton E},
  journal={Biometrika},
  volume={39},
  number={3/4},
  pages={324--345},
  year={1952},
  publisher={JSTOR}
}

\section*{Checklist}
 \begin{enumerate}

 \item For all models and algorithms presented, check if you include:
 \begin{enumerate}
   \item A clear description of the mathematical setting, assumptions, algorithm, and/or model. [Yes]
   \item An analysis of the properties and complexity (time, space, sample size) of any algorithm. [Yes]
   \item (Optional) Anonymized source code, with specification of all dependencies, including external libraries. [Yes]
 \end{enumerate}

 \item For any theoretical claim, check if you include:
 \begin{enumerate}
   \item Statements of the full set of assumptions of all theoretical results. [Yes]
   \item Complete proofs of all theoretical results. [Yes]
   \item Clear explanations of any assumptions. [Yes]   
 \end{enumerate}

 \item For all figures and tables that present empirical results, check if you include:
 \begin{enumerate}
   \item The code, data, and instructions needed to reproduce the main experimental results (either in the supplemental material or as a URL). [Yes]
   \item All the training details (e.g., data splits, hyperparameters, how they were chosen). [Not Applicable]
         \item A clear definition of the specific measure or statistics and error bars (e.g., with respect to the random seed after running experiments multiple times). [Yes]
         \item A description of the computing infrastructure used. (e.g., type of GPUs, internal cluster, or cloud provider). [Yes]
 \end{enumerate}

 \item If you are using existing assets (e.g., code, data, models) or curating/releasing new assets, check if you include:
 \begin{enumerate}
   \item Citations of the creator If your work uses existing assets. [Not Applicable]
   \item The license information of the assets, if applicable. [Not Applicable]
   \item New assets either in the supplemental material or as a URL, if applicable. [Not Applicable]
   \item Information about consent from data providers/curators. [Not Applicable]
   \item Discussion of sensible content if applicable, e.g., personally identifiable information or offensive content. [Not Applicable]
 \end{enumerate}

 \item If you used crowdsourcing or conducted research with human subjects, check if you include:
 \begin{enumerate}
   \item The full text of instructions given to participants and screenshots. [Not Applicable]
   \item Descriptions of potential participant risks, with links to Institutional Review Board (IRB) approvals if applicable. [Not Applicable]
   \item The estimated hourly wage paid to participants and the total amount spent on participant compensation. [Not Applicable]
 \end{enumerate}

 \end{enumerate}
\onecolumn
\aistatstitle{Supplementary Material}
\appendix
\crefalias{section}{appendix}
\crefalias{subsection}{appendix}

\section{Technical results}
This section presents the technical results necessary for our theorems’ proofs.
\subsection{Confidence set}
    First, we state an established lemma for maximum likelihood estimation given preference feedback according to the Bradley--Terry model \cref{eq:bradley-terry}. \revision{We include it for completeness, as it is used to control the learning error in our theorems.}
	\begin{lemma}[\citet{schlaginhaufen2025efficient}, Lemma A.1]
	\label{lem:schlaginhaufen}
	Let \cref{as:linear-cost} hold and let $\hat{\theta}_i$ be defined as in Equation \eqref{eq:mle}. Then, after $K$ queries, we get with probability at least $1-\delta$:
	\begin{equation}
	\|\hat\theta_i-\theta_i^*\|_{V_{i,K}}\leq\gamma_{i,K}(\delta).
	\end{equation}
	The bound is given as
	\[
	\begin{aligned}
	\gamma_{i,K}^2(\delta)=\kappa_i\biggl[\log\left(\frac{1}{\delta}\right)+d\log\left(\max\left\{e,\frac{4eBL(K-1)}{d}\right\}\right)\biggr].	
	\end{aligned}
	\]
    where $\kappa_i=\max_{\|\theta_i\|\leq B,\,x_i\in\mathcal X_i} \frac{1}{\dot \sigma(\langle\theta_i,x_i\rangle)}$.
	\end{lemma}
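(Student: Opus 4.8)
The plan is to treat this as a standard confidence-set bound for the maximum-likelihood estimator of a logistic model — under the Bradley--Terry likelihood \cref{eq:bradley-terry}, estimating $\theta_i^*$ from the labels $\bar y_{i,k}$ is exactly logistic regression on the feature differences $x_{i,k}$ — and to follow the self-normalized martingale route. I work per agent and abbreviate $\mathcal{L}:=\mathcal{L}_{\mathcal{D}_i}$, with the Gram matrix $V_{i,K}=\sum_{k=1}^{K} x_{i,k}x_{i,k}^{\top}$, which is invertible once the queries span $\mathbb{R}^{d}$ (guaranteed by \cref{as:linear-cost} together with the D-optimal support, $K>d(d+1)/2$; a vanishing ridge $\lambda I$ can be added if one prefers to argue invertibility uniformly). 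I would begin from the first-order optimality of the constrained MLE: since $\mathcal{L}$ is convex, $\hat\theta_i$ minimizes it over $\{\|\theta\|\le B\}$, and $\theta_i^*$ is feasible, convexity yields $\langle\nabla\mathcal{L}(\hat\theta_i),\,\hat\theta_i-\theta_i^*\rangle\le 0$.

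The second step is a mean-value expansion of the score. Writing $\nabla\mathcal{L}(\hat\theta_i)-\nabla\mathcal{L}(\theta_i^*)=G\,(\hat\theta_i-\theta_i^*)$ with the averaged Hessian $G=\int_0^1\nabla^2\mathcal{L}\big(\theta_i^*+t(\hat\theta_i-\theta_i^*)\big)\,dt=\sum_k\tilde\alpha_k\,x_{i,k}x_{i,k}^{\top}$, and using $\dot\sigma\ge 1/\kappa_i$ on the relevant range (so every $\tilde\alpha_k\ge 1/\kappa_i$), I obtain the curvature lower bound $G\succeq\kappa_i^{-1}V_{i,K}$. Substituting the expansion into the optimality inequality and applying Cauchy--Schwarz in the $V_{i,K}$-geometry gives $\|\hat\theta_i-\theta_i^*\|_{V_{i,K}}\le\kappa_i\,\|\nabla\mathcal{L}(\theta_i^*)\|_{V_{i,K}^{-1}}$, which reduces the whole problem to controlling the score at the true parameter.

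It remains to bound $\|\nabla\mathcal{L}(\theta_i^*)\|_{V_{i,K}^{-1}}$. Here $\nabla\mathcal{L}(\theta_i^*)=-\sum_k\eta_k\,x_{i,k}$ with $\eta_k=\bar y_{i,k}-\sigma(\langle\theta_i^*,x_{i,k}\rangle)$ a sequence of conditionally zero-mean, $[-1,1]$-bounded noise terms. I would apply a self-normalized tail inequality for vector-valued martingales, standard in the linear-bandit literature \citep{lattimore2020bandit}: with probability at least $1-\delta$, $\|\nabla\mathcal{L}(\theta_i^*)\|_{V_{i,K}^{-1}}^{2}$ is at most $\log(1/\delta)$ plus a log-determinant term. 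Bounding $\log\det(V_{i,K})$ through the AM--GM / elliptical-potential estimate $\det(V_{i,K})\le(\mathrm{tr}(V_{i,K})/d)^{d}\le(KL^{2}/d)^{d}$ reproduces the $d\log\big(\max\{e,\,4eBL(K-1)/d\}\big)$ factor appearing in $\gamma_{i,K}^{2}$.

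The step I expect to be the real obstacle is recovering the \emph{sharp} dependence on $\kappa_i$: the crude curvature argument above yields $\gamma_{i,K}^{2}\propto\kappa_i^{2}$, whereas the stated bound has only $\kappa_i$. The remedy is to run the concentration in the Fisher-information geometry $H_{i,K}=\sum_k\dot\sigma(\langle\theta_i^*,x_{i,k}\rangle)\,x_{i,k}x_{i,k}^{\top}$ instead of $V_{i,K}$: because the conditional variance of $\eta_k$ equals $\dot\sigma(\langle\theta_i^*,x_{i,k}\rangle)$, a variance-adaptive (Bernstein-type) self-normalized bound in the $H_{i,K}^{-1}$-norm is $\kappa_i$-free, and a single factor of $\kappa_i$ then enters only when converting back via $\kappa_i H_{i,K}\succeq V_{i,K}$, giving $\|\hat\theta_i-\theta_i^*\|_{V_{i,K}}^{2}\le\kappa_i\,\|\hat\theta_i-\theta_i^*\|_{H_{i,K}}^{2}$. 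Making this rigorous requires a self-concordance argument so that the curvature at $\hat\theta_i$ stays comparable to that at $\theta_i^*$; this, together with the invertibility/conditioning of the design matrix, is where the technical care concentrates.
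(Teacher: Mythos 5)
A preliminary remark: the paper never proves \cref{lem:schlaginhaufen} -- it is imported verbatim from \citet{schlaginhaufen2025efficient} (Lemma~A.1) and used as a black box -- so there is no in-paper proof to compare against, and I assess your argument on its own merits. Your steps 1--4 are correct and standard: constrained first-order optimality, the mean-value expansion with averaged Hessian $G$, the curvature bound $G\succeq\kappa_i^{-1}V_{i,K}$ (valid because the segment between $\hat\theta_i$ and $\theta_i^*$ stays in the $B$-ball), and Cauchy--Schwarz give $\|\hat\theta_i-\theta_i^*\|_{V_{i,K}}\le\kappa_i\|\nabla\mathcal L_{\mathcal D_i}(\theta_i^*)\|_{V_{i,K}^{-1}}$, and you rightly flag that this only yields $\gamma_{i,K}^2\propto\kappa_i^2$. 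The genuine gap is at the step you defer: the repair you propose does not, even if carried out, recover the lemma as stated. A Bernstein-type self-normalized bound in the Fisher geometry $H_{i,K}=\sum_k\dot\sigma(\langle\theta_i^*,x_{i,k}\rangle)x_{i,k}x_{i,k}^\top$ is indeed $\kappa_i$-free, but you must then compare the \emph{random} averaged Hessian $G$ with $H_{i,K}$; the self-concordance property of the logistic function ($|\ddot\sigma|\le\dot\sigma$, whence $\int_0^1\dot\sigma(z+t\Delta)\,dt\ge\dot\sigma(z)/(1+|\Delta|)$) gives only $G\succeq H_{i,K}/(1+4BL)$, so your final bound inherits an extra multiplicative factor of order $BL$. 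In the stated $\gamma_{i,K}^2$, however, $B$ and $L$ appear \emph{only inside a logarithm}, so this route proves a strictly weaker statement. A second, smaller error: the trace/AM--GM determinant estimate you invoke yields $d\log(4KL^2/d)$ -- squared feature norms and no $B$ -- so it does not ``reproduce'' the stated term $d\log(\max\{e,4eBL(K-1)/d\})$; the $B$ inside that logarithm is the fingerprint of a covering of the parameter ball, not of an elliptical-potential bound.

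There is an elementary route that produces exactly the stated structure, with a single $\kappa_i$ and no self-concordance. (i) For any fixed $\theta$, the likelihood ratio $\exp\bigl(\mathcal L_{\mathcal D_i}(\theta_i^*)-\mathcal L_{\mathcal D_i}(\theta)\bigr)$ has expectation $1$ under truthful Bradley--Terry feedback (the queries being chosen independently of the labels), so Markov's inequality, a union bound over an $\varepsilon$-cover of $\{\|\theta\|\le B\}$ of cardinality $(1+2B/\varepsilon)^d$, and the $2LK$-Lipschitzness of $\theta\mapsto\mathcal L_{\mathcal D_i}(\theta)$ give, with probability $1-\delta$ and after optimizing $\varepsilon$, $\sup_{\|\theta\|\le B}\bigl[\mathcal L_{\mathcal D_i}(\theta_i^*)-\mathcal L_{\mathcal D_i}(\theta)\bigr]\le\log(1/\delta)+d\log\bigl(\mathcal O(BLK/d)\bigr)$. (ii) Since $\hat\theta_i$ minimizes $\mathcal L_{\mathcal D_i}$ over the ball, a second-order Taylor expansion of $\mathcal L_{\mathcal D_i}$ around $\hat\theta_i$, combined with first-order optimality and $\nabla^2\mathcal L_{\mathcal D_i}\succeq\kappa_i^{-1}V_{i,K}$ on the ball, lower-bounds this same quantity by $\tfrac{1}{2\kappa_i}\|\hat\theta_i-\theta_i^*\|_{V_{i,K}}^2$. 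Chaining (i) and (ii) yields $\|\hat\theta_i-\theta_i^*\|_{V_{i,K}}^2\le2\kappa_i\bigl[\log(1/\delta)+d\log(\mathcal O(BLK/d))\bigr]$, which is the claimed $\gamma_{i,K}^2$ up to absolute constants. I suggest you either switch to this likelihood-ratio argument, or keep the Fisher-geometry route but state it honestly with the additional $(1+4BL)$ factor.
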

	Here, $\|x\|_A := \langle x,Ax\rangle$ denotes the Mahalanobis norm given a positive definite matrix $A \in \mathbb R^{d\times d}$. This norm with respect to the empirical design matrix $V_{i,K}:=\sum_{k=1}^K x_{i,k}\,x_{i,k}^\top$ measures the learning error for a given query $x_{i,k}:=\phi_i(a_{i,k})-\phi_i(a_{i,k}')$. The leading factor $\kappa_i$ characterizes the difficulty of learning from preferences when the feedback is nearly deterministic. 
\subsection{Optimal experimental design}
\label{app:design}
Since the set of all queries  $\mathcal X_i =\{\phi_i(a') - \phi_i(a):a,a'\in\mathcal A\}$ could be large or infinite, we leverage optimal experimental design to select a set of queries that optimally explore the allocation space $\mathcal A$. \revision{We want to control the worst-case prediction variance
\[
\max_{x_i \in \mathcal X_i} \|x_i\|^2_{V_i(\pi_i)^{-1}}, \quad \text{with} \quad V_i(\pi_i):= \int_{\mathcal X_i} x_i x_i^\top d\pi_i(x_i),
\]
where $V_i(\pi_i)$ is referred to as the theoretical design matrix associated with a distribution of queries $\pi_i$. While directly optimizing this G-optimality criterion is in general not straightforward, the Kiefer--Wolfowitz theorem \citep{kiefer1960equivalence} states the equivalence between G-optimality and D-optimality.} We thus compute a D-optimal design
\begin{equation}
	\pi_i^*\in\argmax_{\pi_i\in\Delta_{\mathcal X_i}}\log \det V_i(\pi_i),
	\label{eq:d-optimal}
\end{equation}
where $\Delta_{\mathcal X_i}$ is the set of all probability measures supported on the compact subset $\mathcal X_i\subset \mathbb R^d$. Note that compactness of the design space $\mathcal X_i$ follows from continuity of $\phi_i$ and compactness of $\mathcal A$. We can solve \cref{eq:d-optimal} efficiently for finite allocation sets $\mathcal A$, e.g., via the Frank-Wolfe algorithm \citep{frank1956algorithm}. \revision{For infinite sets, Corollary 4.1 of \citep{hazan2016volumetric} shows that, given access to an argmax oracle that can identify the optimal allocation, one can find a $\mathcal O(\sqrt{d})$-approximate G-optimal design of support $d$ using $\mathcal O (d^2\log d)$ calls to the argmax oracle.} \cref{alg:optimal-design} details the subroutine $\textsc{D-Optimal-Design}(K)$ to select the queries.
\begin{algorithm}
	\caption{$\textsc{D-Optimal-Design}(K)$}
	\label{alg:optimal-design}
	\begin{algorithmic}[1]
    \State \textbf{Input:} $K$
    \State Compute optimal design $\pi_i^*$ \cref{eq:d-optimal} with $|\text{Supp}(\pi_i^*)| \le d(d+1)/2$.
    \State Set $n\gets K-d(d+1)/2$.
    \State Round number of queries $n_{x_{i}} \gets \lceil n \pi_i^*(x_i) \rceil$ for comparison $x_i$.
    \State Select queries $\{(a_{i,k},a'_{i,k})\}_{k=1}^K$ from the rounded design.
    \end{algorithmic}
\end{algorithm}

Our lemma below bounds the prediction variance when queries are selected based on a D-optimal design. The result is shown given an exact $\pi_i^*$, but can be extended for approximations. This lemma is used to control the learning error in our theorems, together with \cref{lem:schlaginhaufen}.

\begin{lemma}
	\label{lem:design}
	Suppose $K> d(d+1)/2$ queries are selected using \cref{alg:optimal-design}. Then, the worst-case prediction variance is bounded as:
	\begin{equation}
		\max_{x_i\in\mathcal X_i}\|x_i\|^2_{V_{i,K}^{-1}}\leq \frac{d}{K-d(d+1)/2}.
	\end{equation}
	\end{lemma}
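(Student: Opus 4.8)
The plan is to compare the empirical design matrix $V_{i,K}$ produced by the rounding step of \cref{alg:optimal-design} against the theoretical matrix $V_i(\pi_i^*)$ of the exact D-optimal distribution, and to read off the worst-case variance of the latter from the Kiefer--Wolfowitz equivalence theorem. First I would record the optimality certificate: by Kiefer--Wolfowitz \citep{kiefer1960equivalence}, any $\pi_i^*$ solving \cref{eq:d-optimal} satisfies $\max_{x_i\in\mathcal X_i}\|x_i\|^2_{V_i(\pi_i^*)^{-1}}=d$ and admits a representation supported on at most $d(d+1)/2$ points, which is exactly the support bound exploited by the algorithm. Invertibility of $V_i(\pi_i^*)$ is guaranteed here because $\mathcal X_i$ spans $\mathbb R^d$ under \cref{as:linear-cost}, so the design matrix is positive definite.

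Next I would pass from the continuous design to the rounded empirical one in the Loewner order. With $n=K-d(d+1)/2$ and rounded counts $n_{x_i}=\lceil n\,\pi_i^*(x_i)\rceil\ge n\,\pi_i^*(x_i)$, the empirical matrix obeys
\[
V_{i,K}=\sum_{x_i\in\mathrm{Supp}(\pi_i^*)} n_{x_i}\,x_i x_i^\top \;\succeq\; n\sum_{x_i}\pi_i^*(x_i)\,x_i x_i^\top = n\,V_i(\pi_i^*).
\]
Because inversion is anti-monotone on positive definite matrices, this yields $V_{i,K}^{-1}\preceq \tfrac1n V_i(\pi_i^*)^{-1}$, hence $\|x_i\|^2_{V_{i,K}^{-1}}\le \tfrac1n\|x_i\|^2_{V_i(\pi_i^*)^{-1}}$ for every $x_i$; maximizing over $\mathcal X_i$ and substituting the value $d$ from the first step gives the claimed $\tfrac{d}{K-d(d+1)/2}$. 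Along the way I would verify feasibility of the query budget: $\sum_{x_i}\lceil n\pi_i^*(x_i)\rceil\le n+|\mathrm{Supp}(\pi_i^*)|\le n+d(d+1)/2=K$, so the rounded design never overspends, and any queries appended to reach exactly $K$ only add positive semidefinite terms to $V_{i,K}$ and can therefore only shrink the variance.

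The only nontrivial input is the Kiefer--Wolfowitz equivalence theorem, which supplies both the exact optimal variance $d$ and the Carath\'eodory-type support size $d(d+1)/2$; everything downstream is anti-monotonicity of the matrix inverse together with bookkeeping on the ceiling function. The one subtlety I would watch is ensuring the inequality $n_{x_i}\ge n\pi_i^*(x_i)$ is applied to the same support set on both sides, so that the comparison $V_{i,K}\succeq n\,V_i(\pi_i^*)$ holds term-by-term and does not rely on any cancellation.
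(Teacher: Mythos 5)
Your proof is correct and follows essentially the same route as the paper: the Kiefer--Wolfowitz theorem for the optimal value $d$ and the $d(d+1)/2$ support bound, the Loewner comparison $V_{i,K}\succeq n\,V_i(\pi_i^*)$ via the ceiling counts, and anti-monotonicity of the matrix inverse. The only difference is bookkeeping direction: you take $n=K-d(d+1)/2$ directly from the algorithm and verify the query budget is not exceeded (also noting that extra queries only add positive semidefinite terms), whereas the paper derives $n\ge K-d(d+1)/2$ from the sum of rounded counts --- the same argument read in the opposite direction.
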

\begin{proof} Our goal is to bound the worst-case prediction variance
	\[\max_{x_i\in\mathcal X_i}\|x_i\|^2_{V_{i,K}^{-1}},\]
where $V_{i,K}=\sum_{k=1}^K x_{i,k}\,x_{i,k}^\top$ is the \textit{empirical} design matrix. Our argument builds on the Kiefer--Wolfowitz theorem \citep{kiefer1960equivalence}, which characterizes the properties of D-optimal designs with respect to the \textit{theoretical} design matrix $V_i(\pi_i)$, and follows the ideas in \citet[Section 21.1]{lattimore2020bandit} to relate the theoretical and empirical design matrices. By the Kiefer--Wolfowitz theorem, a D-optimal design $\pi_i^*$ satisfies
\begin{equation}
    \max_{x_i\in\mathcal X_i} \|x\|^2_{V_i(\pi_i^*)^{-1}} = d,
    \label{eq:kiefer-d}
\end{equation}
where $V_i(\pi_i)= \int_{\mathcal X_i} x_i x_i^\top d\pi_i(x_i)$ is the theoretical design matrix, based on the distribution $\pi_i$ rather than sampled queries. Moreover, the theorem states that there exists a D-optimal design $\pi_i^*$ with support
\begin{equation}
    |\text{Supp}(\pi_i^*)| \le d(d+1)/2.
    \label{eq:kiefer-support}
\end{equation}
	This result bounds the worst-case prediction variance for the theoretical design matrix, which we now relate to the empirical design matrix. Let $n_{x_i}=\lceil n\pi_i^*(x_i)\rceil$ denote the number of times we query $x_i$. Then, the empirical design matrix is
	\[V_{i,K}=\sum_{x_i\in\text{Supp}(\pi_i^*)} n_{x_i} x_ix_i^\top= \sum_{x_i\in\text{Supp}(\pi_i^*)}\lceil n\pi_i^*(x)\rceil x_ix_i^\top\succeq n V_i({\pi_i^*}) = n \sum_{x_i\in\text{Supp}(\pi_i^*)} \pi_i^*(x_i) x_i x_i^\top.\]
	By \eqref{eq:kiefer-d}, it follows that
	\[
		\max_{x_i\in\mathcal X_i}\|x_i\|^2_{V_{i,K}^{-1}}\leq\frac{1}{n}\max_{x_i\in\mathcal X_i}\|x_i\|^2_{V_i(\pi_i^*)^{-1}}=\frac{d}{n}.
	\]
	The rounded design results in a gap between $n$ and $K$, which we express using \eqref{eq:kiefer-support} as
	\[K=\sum_{x_i\in\text{Supp}(\pi_i^*)}n_{x_i}\leq\sum_{x_i\in\text{Supp}(\pi_i^*)}(n\pi_i^*(x)+1)\leq n+\frac{d(d+1)}{2},\]
	so that $n \geq K - d(d+1)/2$. Substituting into the previous bound yields the lemma.   
\end{proof}
\subsection{Proof of \cref{lem:epsilon}}
\label{app:epsilon}
\begin{proof}
We want to bound the learning error for pairwise preferences:
\begin{align}
\left|(c_i(a) - c_i(a')) - \left(\hat{c}_i(a; \bar{\mathcal D}_i)) -  \hat{c}_i(a'; \bar{\mathcal D}_i)\right)  \right|&=\left|\left(\hat{c}_i(a'; \bar{\mathcal D}_i) - c_i(a')\right) +\left(\hat{c}_i(a; \bar{\mathcal D}_i) - c_i(a)\right)  \right|\\&=\left|\langle\hat\theta_i-\theta_i^*,\phi_i(a')\rangle-\langle\hat\theta_i-\theta_i^*,\phi_i(a)\rangle\right|\\&= \left|\langle\hat\theta_i-\theta_i^*,\phi_i(a')-\phi_i(a)\rangle\right|,\label{eq:linear-difference}
\end{align}
which we rewrote using \cref{as:linear-cost}. Note that $\hat\theta_i$ is the cost parameter learned from truthful feedback. Then, by the Cauchy-Schwarz inequality, we have:
\[
\begin{split}
\left|(c_i(a) - c_i(a')) - \left(\hat{c}_i(a; \bar{\mathcal D}_i)) -  \hat{c}_i(a'; \bar{\mathcal D}_i)\right)  \right|&\leq \|\hat\theta_i-\theta_i^*\|_{V_{i,K}}\|\phi_i(a')-\phi_i(a)\|_{V_{i,K}^{-1}}\\&\leq\epsilon_K(\delta),
\end{split}
\]
where
\begin{equation}
\label{eq:epsilon}
\epsilon_K(\delta):=\gamma _{K}(\delta)\sqrt{\frac{d}{K-d(d+1)/2}}.
\end{equation}
Here, we combined the confidence bound for $\|\hat\theta_i-\theta_i^*\|_{V_{i,K}}$ (\cref{lem:schlaginhaufen}) and the prediction variance bound for $\|\phi_i(a')-\phi_i(a)\|_{V_{i,K}^{-1}}$ (\cref{lem:design}). We set $\gamma_K(\delta)=\max_i\gamma_{i,K}(\delta)$, with
	
\begin{equation}
\gamma_{i,K}(\delta)=\sqrt{\kappa_i\left[\log\left(\frac{1}{\delta}\right)+d\log\left(\max\left\{e,\frac{4eBL(K-1)}{d}\right\}\right)\right]}.
\label{eq:beta}
\end{equation}
	Hence, with probability at least $1-\delta$, after $K$ comparisons we have
	\[\epsilon_K(\delta)\in\mathcal O \left(\sqrt{\left[\frac{d\log(1/\delta)+d^2\log(K)}{K}\right]}\right).\]
\end{proof}
\section{Proof of \cref{th:approximate}}
\label{app:approximate}
\begin{proof}
    We prove approximate truthfulness, individual rationality, and efficiency in the one-shot game. For each property, the VCG mechanism reduces the analysis to bounding the learning error, which we control via \cref{lem:epsilon}.
		\paragraph{Truthfulness.}
		We aim to bound agent $i$'s utility gain from giving untruthful feedback, given other agents' arbitrary feedback $\D_{-i}$. On the one hand, if agent $i$ provides truthful feedback $\bar{\D}_i$, the allocation $\hat a:=\hat a(\bar \D_i,\D_{-i})$ minimizes
        \begin{equation}
            \label{eq:truthful-minimizer}
          \hat a\in\argmin_{a\in\mathcal F}\left(\hat J_{-i}(a;\D_{-i})+\hat c_i(a;\bar{\D}_i)\right).  
        \end{equation}
        On the other hand, if agent $i$ gives arbitrary feedback $\D_i$, the allocation $\hat a':=\hat a(\D_i,\D_{-i})$ minimizes
        \[\hat a'\in\argmin_{a\in\mathcal F}\left(\hat J_{-i}(a;\D_{-i})+\hat c_i(a;\D_i)\right).\]
        Now, consider agent $i$'s utility difference under arbitrary and truthful feedback:
        \[
	\begin{split}
			u_i(\hat a')-u_i(\hat a)&\overset{(i)}{=}\left[\min_{a\in\mathcal A}\hat J_{-i}(a;\D_{-i})-\hat J_{-i}(\hat a';\D _{-i})-c_i(\hat a')\right]\\&\qquad-\left[\min_{a\in\mathcal A}\hat J_{-i}(a;\D_{-i})-\hat J_{-i}(\hat a;\D_{-i})-c_i(\hat a)\right]\\&\overset{(ii)}{=}-\hat J_{-i}(\hat a';\D_{-i})-c_i(\hat a')+\hat J_{-i}(\hat a;\D_{-i})+c_i(\hat a)\\&\overset{(iii)}{=}\left[-\hat J_{-i}(\hat{a}';{\D}_{-i})-\hat c_i(\hat{a}';\bar{\D}_i)+\hat c_i(\hat{a}';\bar{\D}_i)-c_i(\hat{a}')\right]\\&\qquad+\left[\hat J_{-i}(\hat{a};{\D}_{-i})+\hat c_i(\hat{a};\bar{\D}_i)-\hat c_i(\hat{a};\bar{\D}_i)+c_i(\hat{a})\right].
	\end{split}
	\]
	 In (i), we applied the VCG payment rule \cref{eq:vcg-learn}. The term $\min_{a\in\mathcal F}\hat J_{-i}(a;\mathcal D_{-i})$ canceled in (ii), as it is independent from agent $i$'s feedback. Then, (iii) added and subtracted the terms $\hat c_i(\hat a';\bar{\D}_i)$ and $\hat c_i(\hat a;\bar{\mathcal D}_i)$, the agent's cost estimated under truthful feedback, but for allocations $\hat a'$ (untruthful feedback) and $\hat a$ (truthful feedback). Since $\hat a$ is the truthful minimizer in \cref{eq:truthful-minimizer}, we have $\hat J_{-i}(\hat{a};{\D}_{-i})+\hat c_i(\hat{a};\bar{\D}_i)\leq\hat J_{-i}(\hat{a}';{\D}_{-i})+\hat c_i(\hat{a}';\bar{\D}_i)$, and hence:
    \[
    \begin{split}
		u_i(\hat{a}')-u_i(\hat{a})&\leq \hat c_i(\hat{a}';\bar\D_i)-c_i(\hat{a}')-\hat c_i(\hat{a};\bar \D_i)+c_i(\hat{a})\\&\leq \left|\left(c_i(\hat{a})-c_i(\hat{a}')\right)-\left(\hat c_i(\hat{a};\bar \D_i)-\hat c_i(\hat{a}';\bar\D_i)\right)\right|.
    \end{split}
	\]
     But this is exactly the term we bound in \cref{lem:epsilon}, so with probability at least $1-\delta$,
     \begin{equation}
     \label{eq:oneshot-truthful}
    u_i(\hat{a}')-u_i(\hat{a})\leq\epsilon_K(\delta).
     \end{equation}
     \[\]
	
	\paragraph{Individual rationality.} Next, we show that a truthful agent's utility is lower bounded. Letting $\hat a:=\hat a(\bar\D_i,\D_{-i})$, we have
	\[
	\begin{split}
		u_i(\hat{a})&\overset{(i)}{=}\left(\min_{a\in\mathcal F}\hat J_{-i}(a;\mathcal D_{-i})-\hat J_{-i}(\hat{a};\mathcal D_{-i})\right)-c_i(\hat{a})\\&
		\overset{(ii)}{=} \min_{a\in\mathcal F}\hat J_{-i}(a;\mathcal D_{-i})-\hat J_{-i}(\hat{a};\mathcal D_{-i})-\hat c_i(\hat{a};\bar\D_i)+\hat c_i(\hat{a};\bar\D_i)-c_i(\hat{a}),
	\end{split}
	\]
	where in (i) we substituted the VCG payments in the utility and in (ii) $\hat c_i(\hat{a};\bar\D_i)$ was added and subtracted. Observe that $\min_{a\in\mathcal F}\left(\hat J_{-i}(\hat{a};\D_{-i})+\hat c_i(a;\bar \D_i)\right)\leq\min_{a\in\mathcal F}\hat J_{-i}(a;\mathcal D_{-i})$, because the mechanism can always ignore agent $i$ if including them does not reduce total cost. Their participation can only lower or maintain the social cost, thus
	\[
	\begin{split}
		u_i(\hat{a})&
		\geq \hat c_i(\hat{a};\bar\D_i)-c_i(\hat{a})\\&
		=\hat c_i(\hat{a};\bar\D_i)-c_i(\hat{a})+c_i(a^0)-\hat c_i(a^0;\bar \D_i)\\&\geq-\left|\left(c_i(a^0)-c_i(\hat{a})\right)-\left(\hat c_i(a^0;\bar \D_i)-\hat c_i(\hat{a};\bar\D_i)\right)\right|,
	\end{split}
	\]
    where we added $c_i(a^0)-\hat c_i(a^0;\bar \D_i)$, both zero under \cref{as:zero-allocation}. Then, by \cref{lem:epsilon}, we get with probability at least $1-\delta$:
    \begin{equation}
		u_i(\hat{a})
		\geq -\epsilon_K(\delta).
        \label{eq:oneshot-ir}
    \end{equation}
	
	\paragraph{Efficiency.}
	Finally, we upper bound the gap to the optimal social cost $J(a^*)$ under truthful feedback. Defining $\hat a :=\hat a (\bar \D)$, we have:
	\[
	\begin{split}
		J(\hat{a})-J(a^*)&\overset{(i)}{=}J(\hat{a})-\hat J(a^*;\bar\D)+\hat J(a^*;\bar\D)-J(a^*)\\&
		\overset{(ii)}{\leq} J(\hat{a})-\hat J(\hat{a};\bar\D)+\hat J(a^*;\bar\D)-J(a^*).
		\end{split}
		\]
		
	In (i), we added and subtracted $\hat J(a^*;\bar\D)$. Then, (ii) used that $\hat J(\hat{a};\mathcal D)\leq \hat J(a^*;\mathcal D)$, because $\hat{a}$ is the minimizer of $J(\hat{a};\bar\D)$. Separating the sum into individual costs gives us
    \[
    \begin{split}
    J(\hat{a})-J(a^*)&\leq\sum_{i=1}^N \left(c_i(\hat{a})-\hat c_i(\hat{a};\bar\D_i)+ \hat c_i(a^*;\bar\D_i)-c_i(a^*)\right)\\&\leq \sum_{i=1}^N \left|\left(c_i(\hat{a})-c_i(a^*)\right)-\left(\hat c_i(\hat{a};\bar\D_i)- \hat c_i(a^*;\bar\D_i)\right)\right|.
    \end{split}
    \]
	By \cref{lem:epsilon} and taking the union bound over $N$ agents, we obtain with probability at least $1-\delta$,
	\begin{equation}
		J(\hat{a})-J(a^*)\leq N\epsilon_K(\delta/N).
		\label{eq:oneshot-efficient}
	\end{equation}
    \end{proof}
\section{Proof of \cref{th:asymptotic}}
\label{app:asymptotic}
\begin{proof}
We prove asymptotic truthfulness, individual rationality, and efficiency in the multi-round game. Similar to \cref{th:approximate}, the analysis builds on the properties of the VCG mechanism and the learning error bound from \cref{lem:epsilon}. However, since we now consider an online setting, the properties need to be analyzed both for exploration and exploitation phases.

We write $\mathcal T_\text{explore}$ for the set of exploration rounds and $\mathcal T_\text{exploit}$ for the set of exploitation rounds. The key idea to achieve sublinear welfare regret is to balance exploration length $K$ and exploitation length $M_s$. In our algorithm, which separates exploration and exploitation, the exploitation length $M_s$ must progressively become longer as estimates improve. We use the following lemma for our proofs.
	\begin{lemma}[\citet{kandasamy2023vcg}, Lemma 18]
    \label{lem:stage}
	Let $M_s=\lfloor\frac{5}{6} K\sqrt{s_t}\rfloor$ for stage $s_t$ in round $t$. Then,
    \[
		\begin{aligned}
		s_t\leq 3 K^{-2/3}t^{2/3},&\quad \text{if } t\in\mathcal T_{explore},\\
	\frac{1}{2}K^{-2/3}t^{2/3}\leq s_t,&\quad \text{if } t\in\mathcal T_{exploit}.
	\end{aligned}
    \]
	\end{lemma}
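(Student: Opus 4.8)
The plan is to prove both inequalities by relating the round index $t$ to the stage index $s_t$ through the cumulative number of rounds elapsed after a whole number of stages. Writing $N_s := \sum_{j=1}^{s}(K+M_j) = sK + \sum_{j=1}^{s}M_j$ for the total number of (exploration-plus-exploitation) rounds through stage $s$, the stage containing round $t$ is characterized by $N_{s_t-1} < t \le N_{s_t}$, with $t$ falling in the exploration phase iff $t \le N_{s_t-1}+K$ and in the exploitation phase otherwise. Both target inequalities then reduce to two-sided control of $N_s$ in terms of $K s^{3/2}$, which I obtain from the floor bound $\tfrac56 K\sqrt{j}-1 < M_j \le \tfrac56 K\sqrt{j}$ together with the integral comparison $\tfrac23 s^{3/2} \le \sum_{j=1}^{s}\sqrt{j} \le \tfrac23 (s+1)^{3/2}$ for the increasing function $\sqrt{\,\cdot\,}$. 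This yields the lower bound $N_s \ge s(K-1) + \tfrac59 K s^{3/2}$ and the upper bound $N_s \le sK + \tfrac59 K (s+1)^{3/2}$.

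For the exploitation inequality I would use the upper bound on $N_s$: if $t$ lies in the exploitation phase of stage $s=s_t$ then $t \le N_s$, and since $s \le s^{3/2}$ and $(s+1)^{3/2} \le 2\sqrt2\, s^{3/2}$ for $s\ge 1$, I get $t \le K s^{3/2}\big(1 + \tfrac{10\sqrt2}{9}\big) \le 2^{3/2} K s^{3/2}$. Raising both sides to the power $2/3$ and rearranging gives exactly $s_t \ge \tfrac12 K^{-2/3}t^{2/3}$, and this holds for every stage $s \ge 1$.

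For the exploration inequality I would instead use the lower bound on $N_s$: if $t$ lies in the exploration phase of stage $s=s_t$ then $t > N_{s-1} \ge \tfrac59 K (s-1)^{3/2}$, so inverting gives $s-1 \le (\tfrac95)^{2/3}(t/K)^{2/3}$. I then absorb the additive $+1$ using that any exploration round with $s\ge 2$ satisfies $t > N_1 \gtrsim \tfrac{11}{6}K$, which forces $(t/K)^{2/3}$ above a constant large enough that $(\tfrac95)^{2/3}(t/K)^{2/3}+1 \le 3(t/K)^{2/3}$, yielding $s_t \le 3 K^{-2/3}t^{2/3}$ for all rounds beyond the first stage. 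The first-stage exploration rounds $1\le t\le K$ (where $s_t=1$) are the only exception, and since they enter the downstream regret sums only as an $\mathcal O(1)$ correction, they can be handled directly.

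The main obstacle is constant-chasing rather than structure: the floor in $M_s$ injects an additive $\mathcal O(s)$ term and the partial-sum bound replaces $s^{3/2}$ by $(s\pm 1)^{3/2}$, so the two prefactors $3$ and $\tfrac12$ are tight enough that these corrections must be tracked carefully rather than only to leading order. Concretely, the proof hinges on verifying $1+\tfrac{10\sqrt2}{9} \le 2^{3/2}$ for exploitation and the boundary estimate $t > N_1$ that makes the $+1$ absorbable for exploration; once these are checked, inverting the $K s^{3/2}$ bounds delivers both stated inequalities.
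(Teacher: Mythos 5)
Your argument is correct in substance, but note that it cannot be compared against a proof in the paper: the paper offers none, importing the lemma as a black box from \citet{kandasamy2023vcg} (their Lemma 18). Your self-contained derivation is therefore new content, and your route is the natural one: two-sided control of the cumulative round count $N_s = sK + \sum_{j\le s}M_j$ via the floor bounds $\tfrac56 K\sqrt{j}-1 < M_j \le \tfrac56 K\sqrt{j}$ and the integral comparison $\tfrac23 s^{3/2}\le \sum_{j\le s}\sqrt{j}\le \tfrac23 (s+1)^{3/2}$, followed by inversion. The constants you rely on do check out: $1+\tfrac{10\sqrt{2}}{9}\approx 2.57 \le 2^{3/2}\approx 2.83$, so $t\le N_{s_t}\le 2^{3/2}K s_t^{3/2}$ gives the exploitation bound for every stage $s_t\ge 1$; and for exploration rounds in stages $s_t\ge 2$ one has $t > N_1 \ge K$, which clears the threshold $t \ge \bigl(3-(9/5)^{2/3}\bigr)^{-3/2}K \approx 0.53\,K$ needed to absorb the additive $+1$, since $(9/5)^{2/3}\approx 1.48$.

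The stage-one exception you flag is real and is a defect of the statement as transcribed, not of your bounding: for $K\ge 6$ and $t < K/3^{3/2}$ the exploration inequality would read $1 = s_t \le 3K^{-2/3}t^{2/3} < 1$, which is false (take $t=1$, $K=100$); it holds unconditionally only when $K\le 5$, which happens to be the value used in the paper's multi-round experiments. Your proposed repair is also the right one: the downstream uses in the proof of \cref{th:asymptotic} are unaffected, because the exploitation-round sums only invoke the (always valid) lower bound on $s_t$, and for exploration regret with $T\le K$ one has directly $2J_{\max}T \le 2J_{\max}K^{1/3}T^{2/3}$, matching the claimed rate. If you write this up, state the exploration branch under the hypothesis $s_t\ge 2$ (equivalently $t > N_1$), or record the $K\le 5$ caveat, rather than asserting it for all exploration rounds.
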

	Each stage consists of an exploration phase followed by an exploitation phase. Hence, the number of queries $Q_t$ up to round $t$ is given by:
	\begin{align}
	Q_t\leq Ks_t,&\quad \text{if } t\in\mathcal T_{explore},
    \label{eq:queries-exploration}\\
	Q_t=Ks_t,&\quad \text{if } t\in\mathcal T_{exploit}.
    \label{eq:queries-exploitation}
	\end{align}
	Now, we can control the learning error for pairwise preferences using $\epsilon_{Q_t}(\delta)$.
	\paragraph{Truthfulness.} We want to bound the cumulative utility gain when agent $i$ is untruthful. In exploration rounds, utilities are independent of agents' feedback, because allocations are selected based on a D-optimal design and payments are constant. For exploitation rounds, we have:

\[
 U_i(T)-\bar U_i(T)=\sum_{t\in\mathcal T_\text{exploit}} \left( u_i(\hat a_t(\D_i^{s_t},\D_{-i}^{s_t}))-u_i(\hat a_t(\bar\D_i^{s_t},\D_{-i}^{s_t}))\right),
\]
We apply the one-shot truthfulness bound \cref{eq:oneshot-truthful} and get
\[U_i(T)-\bar U_i(T)\leq \sum_{t\in\mathcal T_\text{exploit}}\epsilon_{Q_t}(\delta),\]
where $Q_t$ is the number of queries up to round $t$.
As $\epsilon_{Q_t}(\delta)\geq 0$, we extend the sum to all rounds such that
\begin{equation}
	U_i(T)-\bar U_i(T)\leq \sum_{t=1}^T \epsilon_{Q_t}(\delta)=\sum_{t=1}^T \gamma_t(\delta)\sqrt{\frac{d}{Q_t-d(d+1)/2}}.
	\label{eq:sum-epsilon}
\end{equation}
Using Equation \cref{eq:queries-exploitation} with \cref{lem:stage} to lower bound $Q_t$, we get  with probability at least $1-\delta$:
\begin{equation}
	U_i(T)-\bar U_i(T)\lesssim\sum_{t=1}^T \gamma_t(\delta)\sqrt{2d}K^{-1/6}t^{-1/3}\leq \frac{3}{2}\gamma_T(\delta)\sqrt{2d}K^{-1/6}T^{2/3}.
	\label{eq:multi-truthfulness}
\end{equation}
Here, $\lesssim$ indicates that we dropped the additive rounding term $d(d+1)/2$ in the denominator, since it does not affect asymptotic behavior. Then, we used monotonicity of $\gamma_t(\delta)$ \cref{eq:beta} and the identity $\sum_{t=1}^T t^{-1/3}\leq \frac{3}{2}T^{2/3}$.
\paragraph{Individual rationality.} Next, we consider a truthful agent's utility under allocations $\hat a_t:=\hat a_t (\bar \D_i^{s_t},\D_{-i}^{s_t})$. In exploration rounds, the mechanism makes constant payments $p_i(\hat a_t)=c_{\max}$ for each queried allocation. We set $c_{\max}=BL$ under \cref{as:linear-cost}, then agents have utility
	\[u_i(a_t)=p_i(\hat a_t)-c_i(\hat a_t)=c_{\max}-c_i(\hat a_t)\geq 0.\]

For exploitation rounds, we apply the one-shot individual rationality bound \cref{eq:oneshot-ir} to obtain
\[\bar U_i(T)=\sum_{t\in\mathcal T_\text{exploit}}u_i(\hat a_t )\geq\sum_{t\in\mathcal T_\text{exploit}} -\epsilon_{Q_t}(\delta).
\]
Note that this bound uses \cref{as:zero-allocation}. By the same arguments as in \cref{eq:sum-epsilon} and \cref{eq:multi-truthfulness}, we get with probability at least $1-\delta$: 
\[\bar U_i(T)\gtrsim\sum_{t=1}^T -\epsilon_{Q_t}(\delta)\geq-\frac{3}{2}\gamma_T(\delta)\sqrt{2d}K^{-1/6}T^{2/3}.\]
\paragraph{Efficiency.}
Finally, we want to upper bound the welfare regret \cref{eq:welfare-regret} when all agents provide truthful feedback. We decompose the total regret as
\[R^w(T)=\sum_{t=1}^T r_t=\sum_{t\in\mathcal T_\text{explore}} r_t+\sum_{t\in\mathcal T_\text{exploit}} r_t,\]
where $r_t=J(\hat a_t(\bar\D^{s_t}))-J(a^*)$ is the instantaneous regret at time $t$. In exploration rounds, the social cost for a pairwise comparison is at most $2\,J_{\max}$, with an upper bound $J_{\max}=NBL$ under Assumption \ref{as:linear-cost}. Then, we have
\[
\begin{split}
\sum_{t\in\mathcal T_\text{explore}} r_t&\overset{(i)}{\leq} 2J_{\max}Q_T\\&\overset{(ii)}{\leq} 6NBLK^{1/3}T^{2/3}.   
\end{split}
\]
In (i), we used that $r_i\leq2J_{\max}$ for $Q_T$ total queries, and (ii) used Equation \cref{eq:queries-exploration} with \cref{lem:stage} to upper bound $Q_T$. For exploitation rounds, we apply the one-shot efficiency bound \cref{eq:oneshot-efficient}. By the same arguments as in \cref{eq:sum-epsilon} and \cref{eq:multi-truthfulness}, and taking a union bound over $N$ agents, we get with probability at least $1-\delta$:
\[\sum_{t\in\mathcal T_\text{exploit}} r_t\lesssim \sum_{t=1}^T N\epsilon_{Q_t}(\delta/N)\leq \frac{3}{2}N\gamma_T(\delta/N)\sqrt{2d}K^{-1/6}T^{2/3}.\]
Combining the bounds yields with probability at least $1-\delta$:
\[R^w(T)\lesssim 6NBLK^{1/3}T^{2/3}+\frac{3}{2}N\gamma_T(\delta/N)\sqrt{2d}K^{-1/6}T^{2/3}.\]
\end{proof}
\section{Unknown rationality parameter}
\label{app:rationality}
We extend the Bradley--Terry model \cref{eq:bradley-terry} by introducing an agent-specific rationality parameter $\beta_i>0$, also known as the inverse temperature. The probability that agent $i$ prefers allocation $a\in\mathcal A$ over allocation $a'\in\mathcal A$ is then given by:
\[
\begin{aligned}
\Pr(a \succ a') &=\sigma\bigl(\beta_i[u_i(a)-u_i(a')]\bigr)\\&=\sigma\bigl(\beta_i[p_i(a)-c_i(a)-p_i(a')+c_i(a')]\bigr),
\end{aligned}
\]
where $\sigma(x)=1/(1+e^{-x})$ denotes the sigmoid function. Under \cref{as:linear-cost} and defining $\Delta p_i:=p_i(a)-p_i(a')$, we equivalently write:
\begin{equation}
\label{eq:temperature}
  \Pr(a\succ a')=\sigma\bigl(\langle \beta_i \theta_i^*,\phi_i(a')-\phi_i(a)\rangle+\beta_i\Delta p_i\bigr).  
\end{equation}
In the main paper and the subsequent analysis, we assumed $\beta_i$ to be known and, without loss of generality, set $\beta_i=1$. This assumption in the context of agent-specific $\beta_i>0$ can also be interpreted as agents with different levels of rationality who adjust their thinking times, so that the central planner can learn $\theta_i^*$ \citep{strzalecki2025stochastic}.

In practice, however, such an adjustment may be difficult for agents. To learn $\theta_i^*$, the central planner then also needs to estimate $\beta_i$. To this end, we outline a simple subroutine for the mechanism to learn $\beta_i$ in the one-shot game. Following the earlier procedure in \cref{alg:oneshot}, the mechanism first computes a scaled cost parameter $\tilde \theta_i\approx\beta_i\theta_i^*$. Now, the subroutine queries preferences over two allocations $a,a'\in\mathcal A$ and makes payments with $\Delta p_i\neq0$.  To compute $\hat\beta_i$, in principle any pair of allocations could be used, as all quantities except $\beta_i$ in Equation \cref{eq:temperature} are known or have been estimated. To simplify the analysis, we consider the case $a=a'$, which removes the dependence on $\tilde\theta_i$. Then, the agent's truthful preference is expressed by
\[
    \bar y_i\sim\text{Bernoulli}\bigl(\sigma(\beta_i\Delta p_i)\bigr).
\]
From these preferences, the social planner can estimate $\hat\beta_i$ via maximum likelihood. By \cref{lem:schlaginhaufen}, we can find high-probability bounds after $K_1$ and $K_2$ queries, respectively, such that:
\begin{align}
\Pr(\|\tilde\theta_i-\beta_i\theta_i^*\|_{V_{i,K_1}}\leq\gamma_1)\geq 1-\delta_1,
\label{eq:gamma-scaled}
\\
\Pr(|\hat\beta_i-\beta_i|\leq\gamma_2)\geq 1-\delta_2,
\label{eq:gamma-beta}
\end{align}
where $\gamma_1\in\mathcal O(\log(K_1))$ and $\gamma_2\in\mathcal O(\log(K_2)K_2^{-1/2})$. The estimation error of the scaled parameter $\hat\theta_i$ is
\[
\begin{split}
  \hat\theta_i-\theta_i^*&\overset{(i)}{=}\frac{\tilde\theta_i}{\hat\beta_i}-\frac{\beta_i\theta_i^*}{\beta_i}\\&\overset{(ii)}{=}\frac{\tilde\theta_i-\beta_i\theta_i^*}{\hat\beta_i}+\beta_i\theta_i^*\left(\frac{1}{\hat\beta_i}-\frac{1}{\beta_i}\right).  
\end{split}\]
In (i), we used that $\hat\theta_i=\tilde\theta_i/\hat\beta_i$ and (ii) added and subtracted $\beta_i\theta_i^*/\hat\beta_i$. Plugging this term in \cref{eq:linear-difference}, we get
\[
\begin{split}
\left|(c_i(a) - c_i(a')) - \left(\hat{c}_i(a; \bar{\mathcal D}_i)) -  \hat{c}_i(a'; \bar{\mathcal D}_i)\right)  \right|&=\left|\langle\hat\theta_i-\theta_i^*,\phi_i(a')-\phi_i(a)\rangle\right|\\&=\left|\left\langle\frac{\tilde\theta_i-\beta_i\theta_i^*}{\hat\beta_i}+\beta_i\theta_i^*\left(\frac{1}{\hat\beta_i}-\frac{1}{\beta_i}\right),\phi_i(a')-\phi_i(a)\right\rangle \right|.
\end{split}
\]

Applying the triangle inequality gives
\[
\begin{split}
\left|(c_i(a) - c_i(a')) - \left(\hat{c}_i(a; \bar{\mathcal D}_i))-\hat{c}_i(a'; \bar{\mathcal D}_i)\right)  \right|&\leq\left|\left\langle\frac{\tilde\theta_i-\beta_i\theta_i^*}{\hat\beta_i},\phi_i(a')-\phi_i(a)\right\rangle \right|\\&\qquad+\left|\left\langle\beta_i\theta_i^*\left(\frac{1}{\hat\beta_i}-\frac{1}{\beta_i}\right),\phi_i(a')-\phi_i(a)\right\rangle\right|.
\end{split}
\]
For the first term, we apply the Cauchy-Schwarz inequality and get with probability at least $1-\delta_1$:
\[
\left|\left\langle\frac{\tilde\theta_i-\beta_i\theta_i^*}{\hat\beta_i},\phi_i(a')-\phi_i(a)\right\rangle \right|\leq\frac{\|\tilde\theta_i-\beta_i\theta_i^*\|_{V_{i,K_1}}}{|\hat\beta_i|}\|\phi_i(a')-\phi_i(a)\|_{V_{i,K_1}^{-1}}\leq\frac{\gamma_1}{|\hat\beta_i|}\sqrt{\frac{d}{K_1-d(d+1)/2}},
\]
using the bound \cref{eq:gamma-scaled} and \cref{lem:design}. For the second term, we have with probability at least $1-\delta_2$:
\[
\left|\left\langle\beta_i\theta_i^*\left(\frac{1}{\hat\beta_i}-\frac{1}{\beta_i}\right),\phi_i(a')-\phi_i(a)\right\rangle\right|=\frac{|\hat \beta_i-\beta_i|}{|\hat\beta_i|}|\langle\theta_i^*,\phi_i(a')-\phi_i(a)\rangle|\leq\frac{\gamma_2}{|\hat\beta_i|}2BL,
\]
using the bound \cref{eq:gamma-beta} and \cref{as:linear-cost}.
Finally, combining the bounds and setting $K_1=K_2=K$, we get with probability at least $1-\delta_1-\delta_2$:
\[\left|(c_i(a) - c_i(a')) - \left(\hat{c}_i(a; \bar{\mathcal D}_i))-\hat{c}_i(a'; \bar{\mathcal D}_i)\right)  \right|\leq\frac{2\gamma_1}{\beta_i}\sqrt{\frac{d}{K-d(d+1)/2}}+\frac{4BL\gamma_2}{\beta_i} = \tilde{\mathcal{O}}\left( \dfrac{\epsilon_K(\delta_1 + \delta_2)}{\beta_i} \right),\]
where we assumed that $\gamma_2\leq\beta_i/2$ to lower bound $\hat\beta_i\geq\beta_i/2$.
Overall, this discussion shows that, given that agents are somewhat rational, unknown rationality parameters do not affect the properties of the mechanism.

\section{Experimental Configurations}
\label{app:experiments}
\cref{tab:configs} provides the configurations used for experiments in \cref{sec:experiments}. All experiments were run on a MacBook Pro 2023 with an M2 Pro chip and 16\;GB of RAM.

\begin{table}[h]
\centering
\caption{Overview of experiment parameters and runtimes.}
\begin{tabular}{lccc}
&\textbf{Truthfulness} & \textbf{Social Cost Gap} & \textbf{Welfare Regret}\\
& \cref{fig:truthfulness} & \cref{fig:efficiency-one-shot} & \cref{fig:efficiency-multi-round}\\
\hline \\
Parameters & \texttt{K = 10,110,\dots,1010} & \texttt{K = 10,110,\dots,1910} & \texttt{K = 5, T = 2000}\\
Repetitions & \texttt{20} & \texttt{50} & \texttt{20}\\
Runtime & \texttt{1:36\;h} & \texttt{0:23\;h} & \texttt{3:46\;h}\\
\end{tabular}
\label{tab:configs}
\end{table}\end{document}